\newcommand{\secref}[1]         {Section~\ref{sec:#1}}
\newcommand{\figref}[1]         {Figure~\ref{fig:#1}}
\newcommand{\figreftwo}[2]         {Figures~\ref{fig:#1} and~\ref{fig:#2}}
\newcommand{\tabref}[1]         {Table~\ref{tab:#1}}
\renewcommand{\eqref}[1]          {Equation~(\ref{eq:#1})}
\newcommand{\eqreftwo}[2]        {Equations~(\ref{eq:#1}) and~(\ref{eq:#2})}
\newcommand{\calset}[1]{\ensuremath{\mathop{\mathcal{#1}}\nolimits}}
\newcommand{\jobset}{\calset{J}}
\newcommand{\machineset}{\calset{M}}
\newcommand{\slotset}{\calset{S}}
\newcommand{\A}{\calset{A}}
\newcommand{\Q}{\calset{Q}}
\newcommand{\paren}[1]          {\left( #1 \right)}
\DeclareMathOperator*{\argmin}{arg\!\min}
\renewcommand*\bar[1]{\rlap{$\smash{\overline{#1}}$}\phantom{#1}}
\newtheorem{claim}            {Claim}
\newenvironment{proof}      {\noindent{\em Proof.}\hspace{1em}}{\qed}
\def\squarebox#1{\hbox to #1{\hfill\vbox to #1{\vfill}}}
\newcommand{\qedbox}            {\vbox{\hrule\hbox{\vrule\squarebox{.667em}\vrule}\hrule}}
\newcommand{\qed}               {\nopagebreak\mbox{}\hfill\qedbox\smallskip}
\def\abstract{\vspace{.5em}
{\textit{\bf Abstract}:\,\relax%
}}
\def\keywords{\vspace{.5em}
{\textit{\bf Keywords}:\,\relax%
}}
\def\endkeywords{\par}
\begin{document}

\title{Energy-Efficient and Thermal-Aware Resource Management for Heterogeneous Datacenters}
\author{Hongyang Sun, Patricia Stolf, Jean-Marc Pierson, and Georges Da Costa \\
\\
IRIT, University of Toulouse\\
118 Route de Narbonne, F-31062 Toulouse Cedex 9, France\\
\{sun, stolf, pierson, dacosta\}@irit.fr
}
\date{}
\maketitle

\begin{abstract}
We propose in this paper to study the energy-, thermal- and performance-aware resource management
in heterogeneous datacenters. Witnessing the continuous development of heterogeneity in
datacenters, we are confronted with their different behaviors in terms of performance, power
consumption and thermal dissipation: Indeed, heterogeneity at server level lies both in the
computing infrastructure (computing power, electrical power consumption) and in the heat removal
systems (different enclosure, fans, thermal sinks). Also the physical locations of the servers
become important with heterogeneity since some servers can (over)heat others. While many studies
address independently these parameters (most of the time performance and power or energy), we show
in this paper the necessity to tackle all these aspects for an optimal resource management of the
computing resources. This leads to improved energy usage in a heterogeneous datacenter including
the cooling of the computer rooms. We build our approach on the concept of heat distribution matrix
to handle the mutual influence of the servers, in heterogeneous environments, which is novel in
this context. We propose a heuristic to solve the server placement problem and we design a generic
greedy framework for the online scheduling problem. We derive several single-objective heuristics
(for performance, energy, cooling) and a novel fuzzy-based priority mechanism to handle their
tradeoffs. Finally, we show results using extensive simulations fed with actual measurements on
heterogeneous servers.

\keywords Datacenter heterogeneity; online scheduling; server placement; cooling; multi-objective
optimization.
\endkeywords
\end{abstract}

\section{Introduction}

The last years have witnessed the development of heterogeneity in clusters and datacenters. Two
main reasons have led to this situation today. The first one is due to the maintenance and
evolution of the components in the datacenters: different generations of computers are commonly
seen in production datacenters since the owners are not changing everything at each update. The
second reason is driven by the idea that heterogeneity might be the key to achieving
energy-proportional computing \cite{Barroso07_Proportional, DaCosta13_Heterogeneity}, especially
for high-performance computing applications.

Many recent studies alert dramatically on the energy consumption of the datacenters. For instance,
Koomey's report \cite{Koomey08_DCReport} claims that today's datacenters are consuming nearly 2\%
of the global energy, and up to half of that is spent on cooling-related activities
\cite{Sawyer04_coolingCost}. This results generally in very poor Power Usage Effectiveness (PUE).

In this paper, we study the multi-objective resource management problem for heterogeneous
datacenters. Besides the performance criterion, we also consider the energy consumption of the
servers and their thermal impact on the datacenter cooling. The aim of our work is to optimize
these objectives and to explore their tradeoffs. In particular, the energy consumption is partly
due to the cooling efficiency in the datacenter \cite{Moore05_Datacenter, Tang08_CyberPhysical},
which is related to both the physical placement of the servers and the scheduling strategies when
jobs dynamically enter and leave the system. The latter also affects the performance and the energy
consumed by the servers.

Server placement in a computer room has been relatively less studied, especially its impact on the
cooling efficiency. The reason for this lack of attention is mainly due to the fact that, when
servers are homogeneous, their relative positions have no impact on the performance and computing
energy. However, server placement can have an impact on the cooling infrastructure. The main
observation is that one server might contribute to the temperature raise at the inlets of the other
servers, due to the recirculation of heat in a datacenter. Such mutual influence can be modeled by
a heat distribution matrix among the servers. If one wants to keep the inlet temperature under a
given threshold, the supplied air temperature has to be decreased accordingly by the cooling
system, which in turn increases its energy consumption. In the presence of heterogeneous servers
with different power consumptions and hence heat dissipation, the problem of find the optimal
placement becomes complicated and, to the best of our knowledge, has not been studied. Since it is
not feasible to change dynamically the positions of the servers in a datacenter, we focus on static
placement to minimize the cooling cost induced by different configurations.

With a given server placement, the traditional problem of job scheduling in the heterogeneous
environment remains. Many previous work (e.g., \cite{Barbosa09_Heterogeneous,
Topcuoglu02_Heterogeneous}) considered only the performance criterion and hence focused on the
jobs' execution times. In order to address the power consumption issue in datacenters, however,
application scheduling must employ a multi-objective approach by considering performance, energy
and cooling together. To account for the fact that a scheduler has no future knowledge (jobs arrive
over time), we need an online scheduling strategy. Instead of designing different independent
algorithms, we design a greedy online scheduling framework that can be adapted easily by redefining
the cost function, from a single objective to two or more objectives. To tackle the
energy-performance tradeoff, we further introduce a fuzzy-based priority approach, which allows to
explore the potential improvement in one objective while relaxing the other objective up to an
acceptable range. This approach can be extended to incorporate more than two objectives in the
framework. Its principle is not limited to the case at hand and can potentially be applied to other
multi-objective optimization problems.

The main contributions of this paper are the following:
\begin{itemize}
\item A static server placement heuristic to reduce the cooling cost for the servers in a datacenter.
\item A greedy scheduling framework and several cost functions to tackle single-objective scheduling (for performance, energy, and
cooling).
\item A fuzzy-based priority approach to handle the tradeoff between two conflicting objectives, and its extension to
multi-objective optimization.
\end{itemize}

These proposals are supported by extensive simulations conducted using real hardware specifications
and software benchmarks, as well as experimentally verified cooling model and heat distribution
matrix \cite{Tang06_SensorBased, Tang08_CyberPhysical}. Specifically for the hardware, a server
system with high packing density and integrated cooling support is chosen for the experiments,
which we believe represents well an emerging class of highly integrated energy-efficient servers.
The results demonstrate the flexibility of our scheduling framework and confirm the effectiveness
of the fuzzy-based approach for exploring the energy-performance tradeoff in heterogeneous
datacenter environments. Our static server placement heuristic is also shown to provide much
improved thermal distribution leading to significant reduction in cooling cost.

The rest of this paper is organized as follows. \secref{statement} formally states the system model
and the scheduling problem. \secref{gmp} describes our greedy server placement heuristic.
\secref{scheduling} presents the job scheduling framework, various cost functions and the
fuzzy-based priority approach. The simulation results are shown in \secref{evaluation}.
\secref{related} reviews some related work, and \secref{conclusion} summarizes the paper and
addresses future directions.

\section{Problem Statement}\label{sec:statement}

\subsection{System Model}

Motivated by the placement of physical servers and the scheduling of High-Performance Computing
(HPC) applications in heterogeneous datacenters, we consider the following system model: A set
$\machineset = \{M_1, M_2, \cdots, M_m\}$ of $m$ servers (or machines) needs to be placed inside a
computer room (or datacenter) with a set of $m$ rack slots, denoted by  $\slotset = \{S_1, S_2,
\cdots, S_m\}$\footnote{In this paper, we assume that the number of rack slots is equal to the
number of servers to be placed, which represents a common scenario in small- and medium-size
datacenters.}. Each server $M_j \in \machineset$ consists of $L_j$ processors of the same type
(possibly on different boards), but the type and the number of processors may vary for different
servers, rendering the system heterogeneous. Each server consumes a \emph{base power} $U^{base}_j$
to support the basic operations of the infrastructure backbone, such as monitoring, networking and
cooling (for instance fans). A set $\jobset = \{J_1, J_2, \cdots, J_n\}$ of $n$ jobs arrive at the
system over time, and they need to be assigned in an online manner to the servers. Each job $J_i
\in \jobset$ has a release time $r_i$ and a processor requirement $l_i$ that must be granted in
order to run on any server. To execute job $J_i$ on server $M_j$ incurs a processing time $P_{i,
j}$ and a power consumption $U_{i, j}$, both of which are server-dependent and become known upon
the job's arrival by prior profiling of the applications. In particular, the profiled application
power consumption is assumed to include the leakage power.

\subsection{Scheduling Model}

We study two orthogonal problems that deal with the placements of hardware and software,
respectively. We call the two problems \emph{static server placement} and \emph{online job
scheduling}. The former concerns the positioning of physical servers in the datacenter, which as
explained in \secref{gmp} will have an impact on the cooling energy in heterogeneous environment.
The latter concerns the dynamic assignment of workloads to the servers, which will impact energy
(due to both computing and cooling) as well as performance.

For the first problem of static server placement, each server needs to be physically and statically
placed in advance to one of the available rack slots in the datacenter. In particular, we are
looking for a mapping $\sigma:\{1, 2, ,\cdots, m\}\rightarrow \{1, 2, ,\cdots, m\}$ from rack slots
to servers so that each slot $S_k$ is filled with a server $M_{\sigma(k)}$. The objective is to
minimize the cooling cost. More details about this problem will be described in \secref{gmp}.

Given a particular server placement, an online scheduling strategy is then required to assign the
jobs to the servers for execution. Specifically, each arrived job $J_i \in \jobset$ must be
assigned irrevocably to a server with at least $l_i$ idle processors, and without any knowledge of
the future arriving jobs. Once the job has been assigned, no preemption or migration is allowed,
which is typically assumed for HPC applications since they tend to incur a significant cost in
terms of data reallocation.

At any time $t$, the total computing power of server $M_j$ is the sum of its base power and the
power consumed for executing all jobs assigned to it, i.e.,
\begin{equation}
U^{comp}_j(t) = U^{base}_j + \sum_{i = 1}^{n} \delta_{i, j}(t)\cdot U_{i, j} \ ,
\end{equation}
where $\delta_{i, j}(t)$ is a binary variable that takes value $1$ if job $J_i$ is running on
server $M_j$ at time $t$ and $0$ otherwise. In order to optimize performance, we do not allow
processor sharing among the jobs. Thus, each server at any time can only host a subset of the jobs
whose total processor requirements are no more than the server's total number of available
processors, i.e., $\sum_{i=1}^n\delta_{i, j}(t)\cdot l_i \le L_j$ for all $1\le j\le m$ at all time
$t$.

\subsection{Cooling Model}

To characterize the cost of cooling, we consider a standard datacenter layout, where server racks
are organized in rows with alternating cold and hot aisles. The computer room air conditioning
(CRAC) unit supplies cool air to the cold aisles through raised floor vents. Each server $M_j \in
\machineset$ in the racks is oriented such that it draws cool air with temperature $T_j^{in}$ from
the inlet and dissipates hot air with temperature $T_j^{out}$ to the outlet. Assuming that the
computing power consumed by a server is completely transformed into heat, the relationship between
the power consumption and the inlet/outlet temperature of server $M_j$ at any time $t$ can be
characterized by \cite{Tang06_SensorBased}:
\begin{equation}\label{eq:energyConserve}
T^{out}_j(t) = T^{in}_j(t) + K_j \cdot U^{comp}_j(t) \ ,
\end{equation}
where $K_j = p f_j c$, with $p$ denoting the air density (in $kg/m^3$), $f_j$ the airflow rate of
server $M_j$ (in $m^3/s$), and $c$ the air heat capacity\footnote{The \emph{air heat capacity}
specifies the energy required to change the temperature of one unit mass of air by one unit
degree.} (in $J/(^{\circ}{\rm C}\cdot kg)$).

Due to complex airflow patterns, typical datacenters experience the so-called \emph{heat
recirculation} phenomenon, where the hot air from the server outlets recirculates in the room and
is mixed with the supplied cool air from the CRAC, causing the temperature at the server inlets to
be higher than that of the supplied air. Prior studies \cite{Tang06_SensorBased,
Tang08_CyberPhysical} have characterized this phenomenon with a \emph{heat distribution matrix}
$\mathbf{D}$ by assuming a fixed airflow pattern in the room and conservation of energy as
described by \eqref{energyConserve}. We adopt this approach here. Let each element $d_{j, k} \in
\mathbf{D}$ represent the temperature increase at the inlet of server $M_j$ per unit of power
consumed by server $M_k$.\footnote{Technically speaking, $d_{j, k}$ represents the temperature
increase for the server at slot $S_j$ due to the power consumption by the server at slot $S_k$. For
convenience, we simply assume that the servers are renamed such that server $M_j$ is placed in slot
$S_j$ for all $1\le j \le m$.} Combining the heat contributions from all servers, the inlet
temperature of server $M_j$ at time $t$ is given by the following equation:
\begin{equation}\label{eq:tin}
T^{in}_j(t) = T^{sup}(t) + \sum_{k=1}^{m}d_{j, k}\cdot U^{comp}_k(t) \ ,
\end{equation}
where $T^{sup}(t)$ denotes the supplied air temperature at time $t$, which should be adjusted to
prevent the inlet temperature of any server from going beyond a \emph{redline temperature}
$T^{red}$; otherwise, the electronic components may not work reliably or are at risk of being
damaged. Hence, the supplied air temperature should be set at most to
\begin{equation}\label{eq:supplyTemp}
T^{sup}(t) = T^{red} - \max_{j=1..m}\sum_{k=1}^{m}d_{j, k}\cdot U^{comp}_k(t) \ .
\end{equation}
The cooling cost is specified as
\begin{equation}\label{eq:coolingCost}
U^{cool}(t) = \frac{\sum_{j=1}^{m}U_j^{comp}(t)}{\mbox{CoP}(T^{sup}(t))} \ ,
\end{equation}
where CoP is the \emph{coefficient of performance}, defined as the ratio of the amount of heat to
be removed to the energy that needs to be consumed in order to perform the cooling
\cite{Moore05_Datacenter}. This coefficient characterizes the efficiency of the CRAC unit, and is
an increasing (usually non-linear) function of the supplied air temperature. Intuitively, it means
that the CRAC unit needs to work harder and thus consumes more energy in order to provide cooler
air to the computer room.

\subsection{Optimization Objectives}

We consider the following \emph{bi-objective optimization problem}: optimizing the performance of
the jobs and minimizing the energy consumption of the datacenter, due to both computing and
cooling.\footnote{The energy consumed by other parts of the datacenter, such as lighting, are
ignored, since they are insignificant compared to the computing and cooling energy.}

For performance, we use the average response time of the jobs as the metric, and it is defined as
\begin{equation}
R_{ave} = \frac{1}{n}\sum_{i=1}^{n}(c_i - r_i) \ ,
\end{equation}
where $c_i$ and $r_i$ denote the completion time and release time of job $J_i$, respectively.

The energy consumption comes from two sources: computing and cooling. The one due to computing is
given by the total computing power of all servers integrated over time, i.e.,
\begin{equation}
E_{comp} = \int_{t_1}^{t_2} \sum_{j=1}^{m}U_j^{comp}(t)dt \ ,
\end{equation}
where $[t_1, t_2]$ denotes the interval of interest, during which all jobs arrive and complete
their executions. This computing energy can be further divided into two parts, namely, the
\emph{static} part due to the base power consumption, i.e.,
\begin{equation}
E_{comp}^{stat} = (t_2-t_1)\cdot \sum_{j=1}^{m} U_j^{base} \ ,
\end{equation}
and the \emph{dynamic} part due to the power consumed for executing the jobs, i.e.,
\begin{equation}
E_{comp}^{dync} = \sum_{i=1}^{n}\sum_{j=1}^{m} \delta_{i,j} \cdot P_{i, j}\cdot U_{i, j} \ ,
\end{equation}
where $\delta_{i, j} = 1$ if job $J_i$ is assigned to server $M_j$ and 0 otherwise.

The energy spent on cooling is the total cooling power integrated over time, i.e.,
\begin{equation}
E_{cool} = \int_{t_1}^{t_2} U^{cool}(t)dt \ ,
\end{equation}
and as with computing energy, cooling energy can also be broken into a static part and a dynamic
part. Specifically, the static part is the cooling energy that will be spent during interval $[t_1,
t_2]$ even if no job arrives, i.e.,
\begin{equation}
E_{cool}^{stat} = \int_{t_1}^{t_2}\frac{\sum_{j=1}^{m}U_j^{base}(t)}{\mbox{CoP}(T^{red} -
\max_{j}\sum_{k}d_{j, k}\cdot U^{base}_k(t))} dt \ ,
\end{equation}
and the dynamic part is the difference between the total cooling energy and the static one, i.e.,
\begin{equation}
E_{cool}^{dync} = E_{cool} - E_{cool}^{stat} \ .
\end{equation}

In this paper, we assume that all servers are turned on all the time to sustain the servers'
infrastructure backbone, so the static energy due to both computing and cooling is independent of
the workload and the job scheduling strategy. On the other hand, the total dynamic energy given by
\begin{equation}\label{eq:totalDynamic}
E_{total}^{dync} = E_{comp}^{dync} + E_{cool}^{dync}
\end{equation}
is closely related to job scheduling, and it will be the focus of this study.

Due to the heterogeneity of the servers in the datacenter, different job scheduling strategies may
result in very different job response time, computing energy and cooling cost. While a specific
scheduling strategy may optimize one objective, these different objectives can be conflicting with
each other, making the optimization difficult. In \secref{scheduling}, we will propose and evaluate
online scheduling algorithms to address both performance and energy as well as to deal with their
tradeoffs.

\section{Static Server Placement and A Greedy Heuristic}\label{sec:gmp}

In this section, we consider the problem of static server placement. We first motivate the study
from the perspective of cooling in heterogeneous datacenters. We then formulate the problem and
present a greedy heuristic.

\subsection{Motivation}\label{sec:motivation}

The literature contains extensive studies on virtual machine placement (e.g.,
\cite{Borgetto12_VirtualMachine, Gao13_VirtualMachines, Xu10_MultiFuzzySet}) for datacenters, but
the placement of physical servers has received little attention. There are two main reasons. First,
many traditional datacenters are homogeneous, so different placements of identical servers do not
make a difference. Second, traditional metrics such as job performance and energy consumption (due
to computing) are independent of the servers' relative positions, so they are unaffected by the
different placement configurations.

As far as the cooling cost is concerned for heterogeneous datacenters, however, the placement of
the physical servers will have an impact. In particular, the studies in \cite{Tang06_SensorBased,
Tang08_CyberPhysical} have shown that the heat recirculation phenomenon in typical datacenters
exhibits the following properties:

(1). Different rack positions tend to behave differently in terms of heat recirculation. Typically,
servers located at the upper parts of the racks ``inhale" more recirculated hot air while servers
located at the lower parts ``contribute" more hot air to recirculate in the room.

(2). In a closed computer room with fixed locations of all major objects and without moving
objects, the airflow pattern that characterizes the heat recirculation among different rack
positions is relatively stable.

While the first property suggests that the heat distribution matrix tends to be highly asymmetric,
the second property assures that the matrix does not change significantly with different workloads
in the servers or different positions of the servers. In the next section, we will rely on workload
placement (or job scheduling) techniques to manage the cooling cost together with other objectives.
Here, we focus on arranging the positions of the servers with different power profiles. The goal is
to reduce the maximum inlet temperature of the servers so as to minimize the cooling cost under a
given load condition.

To illustrate the effectiveness of this approach, consider a simple datacenter with two servers,
two rack slots, and the following heat distribution matrix:
\begin{equation*}
\mathbf{D} = \begin{bmatrix}
  0.002 & 0.004 \\
  0.001 & 0.002
\end{bmatrix} .
\end{equation*}
Suppose the two servers consume an average power of $100$W and $200$W, respectively. By placing the
first server in slot 1 and the second server in slot 2, their inlet temperatures increase by
$1^{\circ}{\rm C}$ and $0.5^{\circ}{\rm C}$ respectively according to \eqref{tin}. By simply
swapping the positions of the two servers, their temperature increases will now become
$0.4^{\circ}{\rm C}$ and $0.8^{\circ}{\rm C}$. The $0.2^{\circ}{\rm C}$ difference in the maximum
inlet temperature of these two configurations directly determines the temperature of the supplied
air by \eqref{supplyTemp}, and therefore impacts the cooling cost. For instance, consider a redline
temperature of $25^{\circ}{\rm C}$ and the following CoP model for a water-chilled CRAC unit in an
HP datacenter \cite{Moore05_Datacenter, Tang08_CyberPhysical}:
\begin{equation}\label{eq:hp_cop}
\mbox{CoP}(T) = 0.0068T^2 + 0.0008T + 0.458 \ .
\end{equation}
According to \eqreftwo{supplyTemp}{coolingCost}, the cooling costs for the two placement
configurations are $68.275$W and $67.269$W, respectively. The impact will be more significant with
a lower redline temperature or a more skewed heat distribution matrix, or when the servers are
consuming more power. The problem will also become more challenging when there is a large number of
servers/positions, since exhaustive search will no longer be possible. The next subsection
considers this general case and proposes a heuristic algorithm for the problem.

\subsection{Greedy Heuristic}\label{sec:heuristic}

To reduce the cooling cost, we should minimize the maximum temperature increase at the inlet of any
server in the datacenter. As we have seen previously, this is determined by both the
heat-distribution matrix and the power consumption profile of all servers. While the former is
relatively stable and can be measured using a sensor-based approach \cite{Tang06_SensorBased}, the
latter essentially depends on the servers' workloads, which can vary with time. To cope with this
uncertainty, we characterize the power consumption of each server statically using the average
power it consumes when executing historical workloads. This provides a reasonable estimation on the
server's typical power consumption during runtime. We call this static value the \emph{reference
power}, and use it to determine the placement of the servers.

Let $U^{ref}_{j}$ denote the reference power of server $M_j \in \machineset$. The \emph{static
server placement} problem can then be formulated as follows: find a mapping $\sigma:\{1, 2,
,\cdots, m\}\rightarrow \{1, 2, ,\cdots, m\}$ from rack slots to servers, so as to
\begin{equation}
\mbox{minimize  }\max\mbox{ }\mathbf{D}\cdot\mathbf{U}_{\sigma}^{ref} \ ,
\end{equation}
where $\mathbf{U}_{\sigma}^{ref} = [U^{ref}_{\sigma(1)}, U^{ref}_{\sigma(2)}, \cdots,
U^{ref}_{\sigma(m)}]^T$. Finding the optimal placement turns out to be a NP-hard problem for
arbitrary heat-distribution matrix and reference power vector. Appendix provides the NP-hardness
proof.

Given the hardness result, we design a heuristic algorithm for the static server placement problem
based on a greedy allocation strategy.  Algorithm \ref{alg:gsp} presents the pseudocode of our
Greedy Server Placement (GSP) heuristic.

\begin{algorithm}[t]
\caption{Greedy Server Placement (GSP)}\label{alg:gsp}
\begin{algorithmic}[1]
\small \Require{The set $\machineset = \{M_1, M_2, \cdots, M_m\}$ of $m$ servers, and the reference
power $U^{ref}_j$ of each server $M_j \in \machineset$; the set $\slotset = \{S_1, S_2, \cdots,
S_m\}$ of $m$ rack slots, and the heat distribution matrix $\mathbf{D}$. } \Ensure{A mapping
$\sigma$ from rack slots to servers.}

\State {Sort the servers in descending order of reference power, i.e., $U^{ref}_1 \ge U^{ref}_2 \ge
\cdots \ge U^{ref}_m$} \State {Initialize $T^{incr}_{l} = 0$ for all $1 \le l \le m$}  \For {each
server $M_j \in \machineset$} \State{$k^* = 0$ and $T^{incr}_{max}(k^*) = \infty$} \For {each slot
$S_k \in \slotset$} \State{$T^{incr}_{max}(k) = \max_{l=1..m}\paren{T_l^{incr} + d_{l, k}\cdot
U^{ref}_j}$} \If{$T^{incr}_{max}(k) < T^{incr}_{max}(k^*)$} \State{$T^{incr}_{max}(k^*) =
T^{incr}_{max}(k)$ and $k^* = k$} \EndIf \EndFor \State{Place server $M_j$ to slot $S_{k^*}$, i.e.,
$\sigma(k^*) = j$} \State{Update $T^{incr}_l = T^{incr}_l + d_{l, k^*}\cdot U^{ref}_j$ for all
$1\le l \le m$} \State{Update $\slotset = \slotset \backslash S_{k^*}$} \EndFor
\end{algorithmic}
\end{algorithm}

First, GSP sorts the servers in descending order of reference powers (Line 1). Since the servers
that consume more power on average will have larger contributions to the temperature increases at
all inlets, they are placed first to have more flexibility in the slot selection and so to avoid
high peak temperature. Let $T^{incr}_l$ denote the existing temperature increase at the inlet of
slot $S_l$, and it is initially set to zero for all inlets (Line 2). Let $T^{incr}_{max}(k)$ denote
the maximum temperature increase if the next server $M_j \in \machineset$ is placed in slot $S_k$,
i.e.,
\begin{equation}
T^{incr}_{max}(k) = \max_{l=1..m}\paren{T_l^{incr} + d_{l, k}\cdot U^{ref}_j} \ .
\end{equation}
Server $M_j$ will be placed in one of the remaining slots $S_{k^*} \in \slotset$ that minimizes the
maximum temperature increase, i.e., $k^* = \argmin_k T^{incr}_{max}(k)$. The temperature increase
at all inlets will then be updated and the filled slot $S_{k^*}$ will be removed from the available
set $\slotset$ (Lines 12 and 13). The algorithm iterates over all servers and terminates after the
last one is placed.

For the complexity of the algorithm, sorting and initialization takes $O(m\log m)$ time. In the
iteration, placing each server incurs $O(m^2)$ time as all remaining slots are examined to
determine the maximum temperature increase at all inlets. Therefore, the overall complexity is
$O(m^3)$. This is reasonable even for a large number of servers, since the process is performed
relatively infrequently: new placement of the servers is only necessary if there are significant
alteration to the datacenter layout or when some servers are removed and new ones are introduced.

\section{Online Job Scheduling and a Fuzzy-Based Priority Approach}\label{sec:scheduling}

Once the servers have been placed in a datacenter, they will start operation by executing the
applications or jobs. In practice, jobs are submitted by different users over time, so each job
must be assigned to a server without knowing future job arrivals. This section considers online job
scheduling under a given server placement to optimize performance and energy, and to deal with
their tradeoffs.

\subsection{Greedy Scheduling Framework}

All online scheduling algorithms described in this section fall under a Greedy Scheduling Framework
(GSF), which is evoked whenever a new job arrives or an existing job completes execution. Algorithm
\ref{alg:gsf} presents the pseudocode of this framework.

\begin{algorithm}[t]
\caption{Greedy Scheduling Framework (GSF)}\label{alg:gsf}
\begin{algorithmic}[1]
\small \Require{Job queue $\Q$, and for each job $J_i \in \Q$, the processor requirement $l_i$,
processing time $P_{i, j}$ and power consumption $U_{i, j}$; Server set $\machineset$, and for each
server $M_j \in \machineset$, the number $\bar{L}_j$ of available processors, which is initialized
to $\bar{L}_j = L_j$.} \Ensure{Assignments of the newly arrived job and the jobs in $\Q$ to the
servers in $\machineset$.} \If{a new job $J_i$ arrives} \State{$j^* = 0$ and $H_{i, j^*} = \infty$}
\For {each server $M_j \in \machineset$} \If {$\bar{L}_j \ge l_i$ \& $H_{i, j} < H_{i,
j^*}$}\State{$H_{i, j^*} = H_{i, j}$ and $j^* = j$} \EndIf \EndFor \If{$H_{i, j^*} \neq \infty$}
\State{Assign job $J_i$ to server $M_{j^*}$} \State{Update $\bar{L}_{j^*} = \bar{L}_{j^*} - l_i$}
\Else \State{Put job $J_i$ in queue $\Q$ in shortest job first order} \EndIf \ElsIf{a job $J_i$
completes execution on server $M_j$} \State{Update $\bar{L}_j = \bar{L}_j + l_i$} \For{each job
$J_k \in \Q$} \If{$\bar{L}_j \ge l_k$} \State{Assign job $J_k$ to server $M_{j}$} \State{Update
$\bar{L}_{j} = \bar{L}_{j} - l_k$} \EndIf \EndFor \EndIf
\end{algorithmic}
\end{algorithm}

The variable $H_{i, j}$ shown in the pseudocode represents the cost of assigning job $J_i$ to
server $M_j$. Specifically, $H_{i, j}$ can be a single-objective cost function of job response
time, energy consumption, etc. (see \secref{single}), or it can be a composite cost function of two
or more objectives (see \secref{multi}).

For each newly arrived job $J_i$, among the servers that have sufficiently available processors to
host it, the server with the minimum cost in terms of $H_{i, j}$ will be chosen for assigning the
job (Lines 2-9). This makes the scheduling framework greedy. If no server has enough processors to
host it, the job will be put in a waiting queue $\Q$ in Shortest Job First (SJF) order \cite{Silberschatz05_OSConcept}, which is
known to optimize the average response time (Line 12). Note that
although the processing times of the jobs are server-dependent, their relative sizes are assumed to
be consistent on different servers, i.e., a fast server is fast for all jobs. Hence, SJF can be
realized by using any server as the reference for comparing the jobs' processing times. When a job
completes execution on a server and therefore releases the occupied processors, the waiting jobs in
the queue will be tested in sequence to see if they can be assigned to this server (Lines 16-18).
Whenever a job is assigned or a running job completes execution, the number of available processors
on the server will be updated (Lines 10, 15, 19). Under this greedy scheduling framework, the
assignment of each job takes $O(m)$ time, so the overall complexity is $O(mn)$ for assigning $n$
jobs.

The next two subsections will describe heuristic algorithms that minimize different single- and
multi-objective cost functions depending on the optimization criteria.

\subsection{Single-Objective Scheduling}\label{sec:single}

Single-objective scheduling considers one optimization criterion when deciding where to assign each
job. In this subsection, we will present several single-objective scheduling heuristics. Some of
them will also be used as the base algorithms for designing the more complex multi-objective
scheduling heuristics in the next subsection.

First, the following describes some single-objective heuristics proposed in the literature
\cite{Moore05_Datacenter, Tang08_CyberPhysical}.

\begin{itemize}
\item \emph{Uniform}: Assign each job randomly to a server according to the uniform distribution.

\item \emph{MinHR}: Assign each job to a server that contributes minimally to the heat recirculation in the room. The cost function is defined as
\begin{equation}\label{eq:minHR}
H^{HR}_{i, j} = \sum_{k=1}^{m} d_{k, j} \ .
\end{equation}

\item \emph{CoolestInlet}: Assign each job to a server with the lowest temperature at its inlet. The cost function is defined as
\begin{equation}\label{eq:coolestInlet}
H^{CI}_{i, j} = T_j^{in} \ ,
\end{equation}
where $T_j^{in}$ denotes the current temperature at the inlet of server $M_j$.
\end{itemize}

Note that, in \cite{Moore05_Datacenter, Tang08_CyberPhysical}, these heuristics were applied in the
offline setting, where the information of all jobs is available to the scheduler. Here, they are
cast as online heuristics. While the aim of \emph{Uniform} is to balance the workload on all
servers, \emph{MinHR} and \emph{CoolestInlet} attempt to minimize the overall heat recirculation
and to achieve a uniform temperature distribution, respectively. However, these heuristics were
proposed for the homogeneous datacenter environments, and therefore do not consider job-specific
characteristics. The following heuristics take job-dependent information into account by minimizing
the performance, energy consumption, and temperature, respectively.

\begin{itemize}
\item \emph{Perf-Aware}: Assign job $J_i$ to a server that renders the minimum response time. The cost function is defined as
\begin{equation}\label{eq:fastest}
H^{P}_{i, j} = P_{i, j} \ ,
\end{equation}
where $P_{i, j}$ denotes the execution time of job $J_i$ on server $M_{j}$.

\item \emph{Energy-Aware}: Assign job $J_i$ to a server that incurs the minimum dynamic energy consumption due to both computing and cooling. The cost function is defined as
\begin{equation}\label{eq:greenest}
H^{E}_{i, j} = E_{total}^{dync}(\delta_{i, j} = 1) \ ,
\end{equation}
where $E_{total}^{dync}$ is the total dynamic energy defined in \eqref{totalDynamic}, and it is
evaluated based on the currently running jobs and  with job $J_i$ assigned to server $M_j$, i.e.,
$\delta_{i, j} = 1$.

\item \emph{Thermal-Aware}: Assign job $J_i$ to a server that minimizes the maximum inlet temperature. The cost function is defined as
\begin{equation}\label{eq:coolest}
H^{T}_{i, j} = \max_{k=1\cdots m} \paren{T_k^{in} + \sum_{k=1}^{m}d_{k, j} \cdot U_{i, j}},
\end{equation}
where $T_k^{in}$ denotes the current temperature at the inlet of server $M_k$, and $U_{i, j}$
denotes the power consumption of job $J_i$ on server $M_j$.
\end{itemize}

Except for \emph{Uniform}, all heuristics above break the tie by randomly selecting a server with
the best cost function. The difference between \emph{CoolestInlet} and \emph{Thermal-Aware} is that
the former considers the current inlet temperature before the job is assigned, whereas the latter
considers the resulting temperature if the job is assigned to the server. Note that all of these
heuristics make greedy decisions locally for each arriving job, so they are not guaranteed to
provide the optimal global cost.

\subsection{Multi-Objective Scheduling with Fuzzy-Based Priority}\label{sec:multi}

Scheduling jobs to optimize two or more objectives usually require exploring the tradeoff between
the conflicting goals. In this subsection, we propose a novel \emph{fuzzy-based priority} approach
to handle such a tradeoff.

\paragraph{Fuzzy-Based Priority for Bi-Objective Scheduling}

We first consider optimizing two objectives, for which we define the following composite cost
function:
\begin{equation}
H_{i, j}^{X, Y} = \langle \bar{H}^X_{i, j}(f), H_{i, j}^Y\rangle \ .
\end{equation}
In this case, the objectives $X$ and $Y$ are considered one after another by first selecting all
servers that offer the best performance in terms of $X$, and then selecting among this subset any
server that offers the best performance in terms of $Y$. To avoid depriving the second objective
altogether, a \emph{fuzzy factor} $f$, where $f\in [0, 1]$, is used to relax the selection
criterion for the first objective up to a predefined margin (in percentage). The purpose is to
explore any potential improvement for $Y$ while maintaining the performance for $X$ within a
user-defined range of acceptance. The approach will be particularly effective if a small compromise
in $X$ can lead to a large improvement in $Y$. Setting $f=0$ indicates the high importance of $X$
that should not be compromised at all, while setting $f=1$ suggests that $X$ does not matter in the
optimization. Varying $f$ in between gives the user a flexible and intuitive way to specify the
tradeoff between the two objectives.

To implement the fuzzy-based priority approach in the online Greedy Scheduling Framework (GSF) as
shown in Algorithm \ref{alg:gsf}, the cost function for the first objective $X$ needs to be
normalized between 0 and 1 in order to take the fuzzy factor into account, i.e.,
\begin{equation}\label{eq:normalization}
\bar{H}^X_{i, j} = \frac{H^X_{i, j} - {H}^X_{i, min}}{H^X_{i, max} - {H}^X_{i, min}} \ ,
\end{equation}
where ${H}^X_{i, min}$ and ${H}^X_{i, max}$ denote the minimum and maximum costs in terms of
objective $X$ among all available servers to assign job $J_i$, and they can be easily collected by a linear scan of the available servers. The implementation then relies on
the following rule for comparing the relative cost of assignment on any two servers.

\textbf{\emph{Fuzzy-Based Priority Rule (for Two Objectives)}}: The costs incurred by assigning job
$J_i$ to any two servers $M_{j_1}$ and $M_{j_2}$ satisfy $H_{i, j_1}^{X, Y} < H_{i, j_2}^{X, Y}$ if
and only if one of the following conditions holds:
\begin{itemize}
\item $\bar{H}^X_{i, j_1} \le f < \bar{H}^X_{i, j_2}$, or
\vspace{0.2em}
\item $\bar{H}^X_{i, j_1} \le f$ and $\bar{H}^X_{i, j_2} \le f$ and $H^Y_{i, j_1} < H^Y_{i, j_2}$, or
\vspace{0.2em}
\item $\bar{H}^X_{i, j_1} < \bar{H}^X_{i, j_2} \le f$ and $H^Y_{i, j_1} = H^Y_{i, j_2}$, or
\vspace{0.2em}
\item $f < \bar{H}^X_{i, j_1} < \bar{H}^X_{i, j_2}$, or
\vspace{0.2em}
\item $f < \bar{H}^X_{i, j_1} = \bar{H}^X_{i, j_2}$ and $H^Y_{i, j_1} < H^Y_{i, j_2}$.
\end{itemize}

This rule can be applied to optimize any two objectives, as long as they have well-defined cost
functions, such as the ones given in \secref{single}. The value of the fuzzy factor as well as the
priority depend on the relative importance of the two objectives to optimize, which can be
determined by the user or the system administrator.

\paragraph{Extension to Multi-Objective Scheduling} The fuzzy-based priority approach can be extended to include more than
two objectives. As in the bi-objective case, we can optimize a sequence of objectives one after
another, while using a (possibly different) fuzzy factor to specify the acceptable range for each
objective. The following illustrates this method with a composite cost function consisting of $s$
objectives:
\begin{equation}
H_{i, j}^{X_1, X_2,\cdots, X_s} = \langle \bar{H}^{X_1}_{i, j}(f_1), \bar{H}^{X_2}_{i,
j}(f_2),\cdots, H_{i, j}^{X_s}\rangle \ .
\end{equation}
In this case, the servers that are ranked among the top $f_1$ percent in terms of objective $X_1$
will be selected first. Then, within this subset, the ones that fall into the top $f_2$ percent in
terms of objective $X_2$ will be further selected. This process continues until the $(s-1)$-th
objective is considered. Finally, a server that survives the first $s-1$ rounds of selection and
has the best performance in terms of the last objective $X_s$ will be chosen as the final winner.


Again, the order of the priorities and the values of the fuzzy factors should be determined by the
relative importance of different objectives to optimize.

\paragraph{Comparison with Other Approaches}

We now comment on the similarities and differences of the fuzzy-based priority approach in
comparison with a few other multi-objective optimization approaches commonly found in the
literature. \figref{schema} illustrates the basic principles of these approaches using bi-objective
scheduling as an example. \secref{related} describes some related work on the applications of these
approaches in multi-objective scheduling.

(1). \emph{Simple priority}. This is a special case of the fuzzy-based priority approach with fuzzy
factor $f = 0$. It is usually applied in settings where strict priorities are imposed on different
objectives. This approach provides better result for the first objective, but may lead to much
worse performance for the second one. In contrast, the fuzzy-based priority approach is more
effective in settings with soft (or non-strict) priorities, especially if an objective with
slightly lower priority can be significantly improved with just a little compromise for a
high-priority objective.

(2). \emph{Pareto frontier}. This approach returns a set of \emph{nondominated}
solutions\footnote{A solution is called \emph{nondominated} if no other solution has better
performance in terms of all the objectives.} to the user instead of only one solution. It is widely
applied in offline settings to quantify the tradeoffs among different objectives. In the context of
online scheduling, however, multiple solutions are hard to maintain over time, and one of the
intermediate solutions must be selected on-the-fly in order to decide where each job should be
assigned.

(3). \emph{Constraint optimization}. This approach optimizes one objective subject to certain
constraints imposed on the other(s). It is commonly applied in environments with strict or
clearly-defined requirements, e.g., job deadline or energy budget. Instead of using an absolute
value as the constraint, the fuzzy-based priority approach specifies the constraint as a relative
threshold, i.e., fuzzy factor, in terms of percentage.

(4). \emph{Weighted sum}. This approach transforms multiple objectives into a single one by
optimizing a weighted combination. Although priorities are not explicitly specified, it uses
weights to indicate the relative importance of the objectives. As different objectives can have
different units, they are often normalized in order to be combined. However, it may not be
intuitive to set the values of the weights, e.g., for time and energy.

Compared to simple priority and constraint optimization, fuzzy-based priority is particularly
suitable for scheduling HPC applications in datacenters, where no strict constraints or priority
are normally imposed on job performance or energy consumption. Compared to weighted sum,
fuzzy-based priority provides an intuitive alternative to describing the tradeoffs while specifying
soft preference of the user on the priority of the objectives. Setting an appropriate fuzzy factor
encodes such preference in an online manner. As shown in \figref{schema}, the solution returned by
fuzzy-based priority (and other approaches) when scheduling an individual job actually lies on the
pareto frontier.

\begin{figure}[t]
\centering
    \includegraphics[width=3in]{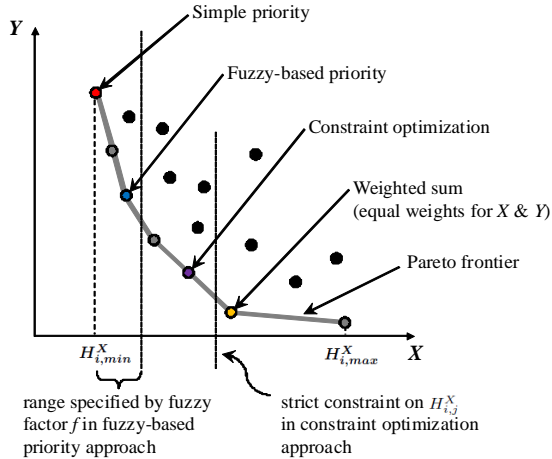}
    \caption{Comparison of the fuzzy-based priority approach with four other approaches in bi-objective scheduling.
    Each dot represents a potential solution, and the solution returned by each approach is indicated.}
    \label{fig:schema}
\end{figure}

\section{Performance Evaluations}\label{sec:evaluation}

In this section, we will evaluate the proposed online scheduling heuristics with the fuzzy-based
priority approach and the greedy heuristic for server placement. The evaluations are performed by
simulation using the Data Center Workload and Resource Management Simulator (DCworms)
\cite{Kurowskia13_DCWoRMS}.

\subsection{Simulation Setup}

\paragraph{Datacenter Configuration}

We simulate a datacenter with 50 servers and the same configuration as the one considered
in \cite{Tang06_SensorBased, Tang08_CyberPhysical}, which essentially determines the heat distribution matrix. Specifically, the datacenter consists of two rows of racks in a
typical cold aisle and hot aisle layout. The cool air is supplied by the CRAC unit from the cold
aisle between the two rows. Each row has five racks and each rack contains five servers. The server
platform used in the simulation is based on Christmann's \emph{Resource Efficient Cluster Server
(RECS)} unit \cite{Christmann09_RECS}, which is a multi-node computer system consisting of 18
processors. The datacenter consists of 900 processors in total. The RECS platform is chosen because
it represents an emerging class of high-density and energy-efficient servers with built-in power
and temperature sensors and integrated cooling support.

Table \ref{tab:parameters} shows the parameters used in the simulation, whose values are based on
real measurements in a RECS unit. From the first three parameters, the heat distribution matrix
$\mathbf{D}$ used in this paper is derived by adopting the same airflow pattern as the one measured in
\cite{Tang06_SensorBased, Tang08_CyberPhysical}. The Coefficient of Performance (CoP) is based on
the one in an HP datacenter \cite{Moore05_Datacenter} as shown by \eqref{hp_cop}.

\begin{table}[h]
\small \caption{\label{tab:parameters} Values of the parameters used in the simulation.} \centering
\begin{tabular}{| c | c |}
\hline
\emph{Parameter} & \emph{Value} \\
\hline\hline
air density ($p$) & $1.168kg/m^3$\\
\hline
air flow rate ($f_j$) & $0.1m^3/s$\\
\hline
air heat capacity ($c$) & $1004J/(^{\circ}{\rm C}\cdot kg)$\\
\hline
base power ($U^{base}_j$) & $130W$\\
\hline
redline temperature ($T^{red}$) & $25^{\circ}{\rm C}$\\
\hline
\end{tabular}
\end{table}

\paragraph{Processor Types}

To construct a heterogeneous datacenter, we select a set of five nondominated processors in terms
of performance and energy indices (the smaller the better). The performance index of a processor is
calculated as the reciprocal of its performance score measured by the passmark software
\cite{Passmark14}, which synthesizes thousands of benchmark results as the processor's performance
indicator. The energy index is simply the product of the processor's performance index and its
Thermal Design Power (TDP), which gives a relative indicator (compared to other processors) on the
average energy the processor consumes when running typical benchmarks.

\figref{processor_mix} plots the two indices for more than 500 types of processors released by
Intel between 2009 and 2013, among which five processors in the pareto frontier are selected
(marked in the figure). Table \ref{tab:processors} shows the passmark scores and TDPs of the five
selected processors. We choose these processors because they form a nondominated set, making the
scheduling problem non-trivial. In this case, no processor is dominated by others in terms of both
performance and energy consumption; hence tradeoff exists when assigning a job to different
processor types. In the simulation, each type of processor makes up 10 RECS servers with 180
computing nodes in total.

\begin{figure}[t]
\centering
    \includegraphics[width=2in]{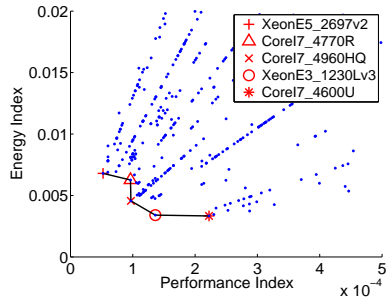}
    \caption{The performance and energy indices of 500+ processors released by Intel between 2009 and 2013. Five processors (marked) in the pareto frontier are selected for our simulation.}
    \label{fig:processor_mix}
\end{figure}

\begin{table}[h]
\small \caption{\label{tab:processors} Passmark scores (as of January 2014) and TDPs of five types
of processors used in the simulation.} \centering
\begin{tabular}{| c | c | c |}
\hline
  & Passmark & TDP(W) \\
\hline
Intel CoreI7\_4770R & 10381 & 65 \\
\hline
Intel CoreI7\_4960HQ & 10310 & 47 \\
\hline
Intel CoreI7\_4600U & 4498 & 15 \\
\hline
Intel XeonE5\_2697v2 & 19125 & 130 \\
\hline
Intel XeonE3\_1230Lv3 & 7344 & 25 \\
\hline
\end{tabular}
\end{table}

\paragraph{Benchmarks and Workloads}

The benchmarks used in the simulation consist of the following high-performance computing
applications, which are included in DCWorms.

\begin{itemize}
\item \texttt{fft}: a program to compute Fast Fourier Transforms.
\item \texttt{c-ray}: a raytracing software.
\item \texttt{abinit}: a tool to compute material properties at the atom level.
\item \texttt{linpack}: a library for performing numerical linear algebra.
\item \texttt{tar}: a program to create and manipulate tar archives.
\end{itemize}

These benchmarks exhibit a large spectrum of behaviors, from CPU intensive to memory intensive, to
communication and I/O intensive. More explanation and rationale of this choice can be found in
\cite{DaCosta13_Benchmarks}. To profile the execution time and power consumption of these
benchmarks, an application-specific approach \cite{Kurowskia13_DCWoRMS} was adopted. Specifically,
average measurements are collected for each application with different input parameters on Intel
Core I7\_2715QE, a less powerful processor available in our RECS testbed. The results are then
translated to our target platforms using the relative performance and power indicators as shown in
Table \ref{tab:processors}. \tabref{benchmarks} details the average execution time and the
corresponding power consumption of the benchmarks on each of the five selected processors.

\begin{table}[h]
\small \caption{\label{tab:benchmarks} Average execution time (above, in sec) and power consumption
(below, in Watt) of each benchmark on each type of processor.} \centering
\begin{tabular}{| c | c | c | c | c | c |}
\hline
 & CoreI7  & CoreI7  & CoreI7  & XeonE5 & XeonE3 \\
 & 4770R   & 4960HQ   & 4600U   & 2697v2  & 1230Lv3 \\
\hline \hline
\multirow{2}{*}{\texttt{fft}} & 3400 & 3450 & 7850 & 1850 & 4800\\
& 62.27 & 45.03 & 14.37 & 124.54 &  23.95\\
\hline
\multirow{2}{*}{\texttt{c-ray}} & 1150 & 1200 & 2700 & 650 & 1650 \\
& 33.70 & 24.37 & 7.78 & 67.41 & 12.96\\
\hline
\multirow{2}{*}{\texttt{abinit}} & 1700 & 1750 & 3950 & 950 & 2450 \\
& 36.11 & 26.11 & 8.33 & 72.22 & 13.89 \\
\hline
\multirow{2}{*}{\texttt{linpack}} & 3350 & 3400 & 7700 & 1850 & 4750\\
& 53.81 & 38.91 & 12.42 & 107.61 & 20.69\\
\hline
\multirow{2}{*}{\texttt{tar}} & 2000 & 2050 & 4600 & 1100 & 2800\\
& 50.92 & 36.82 & 11.75 & 101.83 & 19.58\\
\hline
\end{tabular}
\end{table}

Each job is randomly selected from one of these benchmarks and the number of processors it requires
is randomly generated from 1 to 8 with uniform distribution. Following the definition in
\cite{Downey97_Model}, the system load $\rho$ is defined to be
\begin{equation}
\rho = \frac{\lambda \cdot E[P]}{\sum_{j=1}^{m}L_j} \ ,
\end{equation}
where $\lambda$ is the arrival rate (in \#jobs per hour), $E[P]$ is the average sequential
execution time of the jobs on all processor types (roughly 4.5 hours) and ${\sum_{j=1}^{m}L_j}$ is
the total number of processors, which is 900 in the simulation. Jobs arrive according to the
Poisson process, and the arrival rate $\lambda$ is increased from 20 to 200 with a fixed arrival
duration of 8 hours. The total number of jobs ranges from 160 to 1600, and the system load is
between 0.1 and 1.

\subsection{Simulation Results}

This section presents the simulation results. First, we evaluate the performance of various online
scheduling heuristics with a fixed placement for the servers. We then study the impact of different
placement configurations on the performance of the scheduling heuristics. All results are obtained
by carrying out the experiments 10 times and taking the average.

\subsubsection{Results of Single-Objective Scheduling Heuristics}

\begin{figure*}[t]
\centering
    \subfigure[]
    {
        \label{fig:single:meanRT}
        \includegraphics[width=2in]{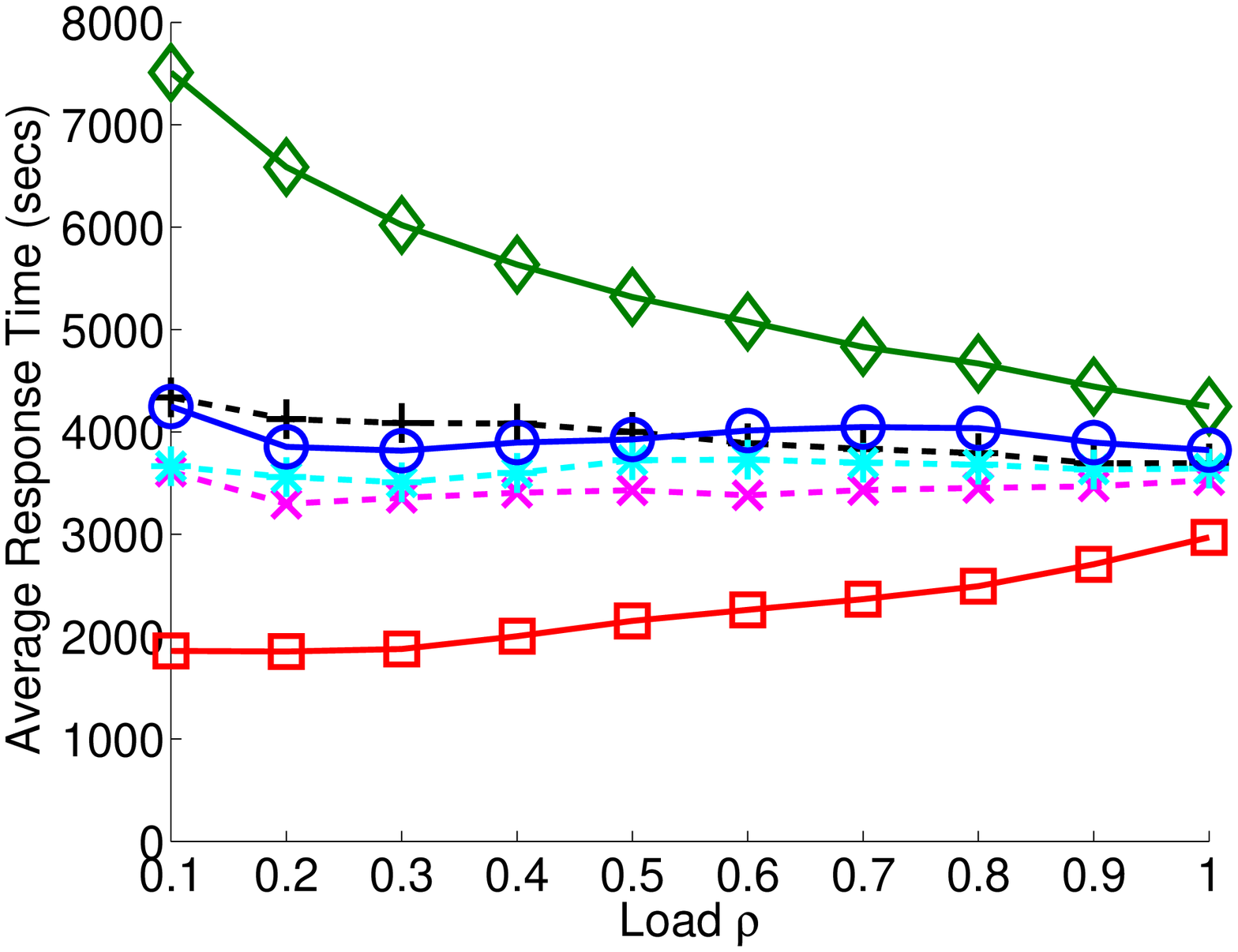}
    }
    \subfigure[]
    {
        \label{fig:single:makespan}
        \includegraphics[width=2in]{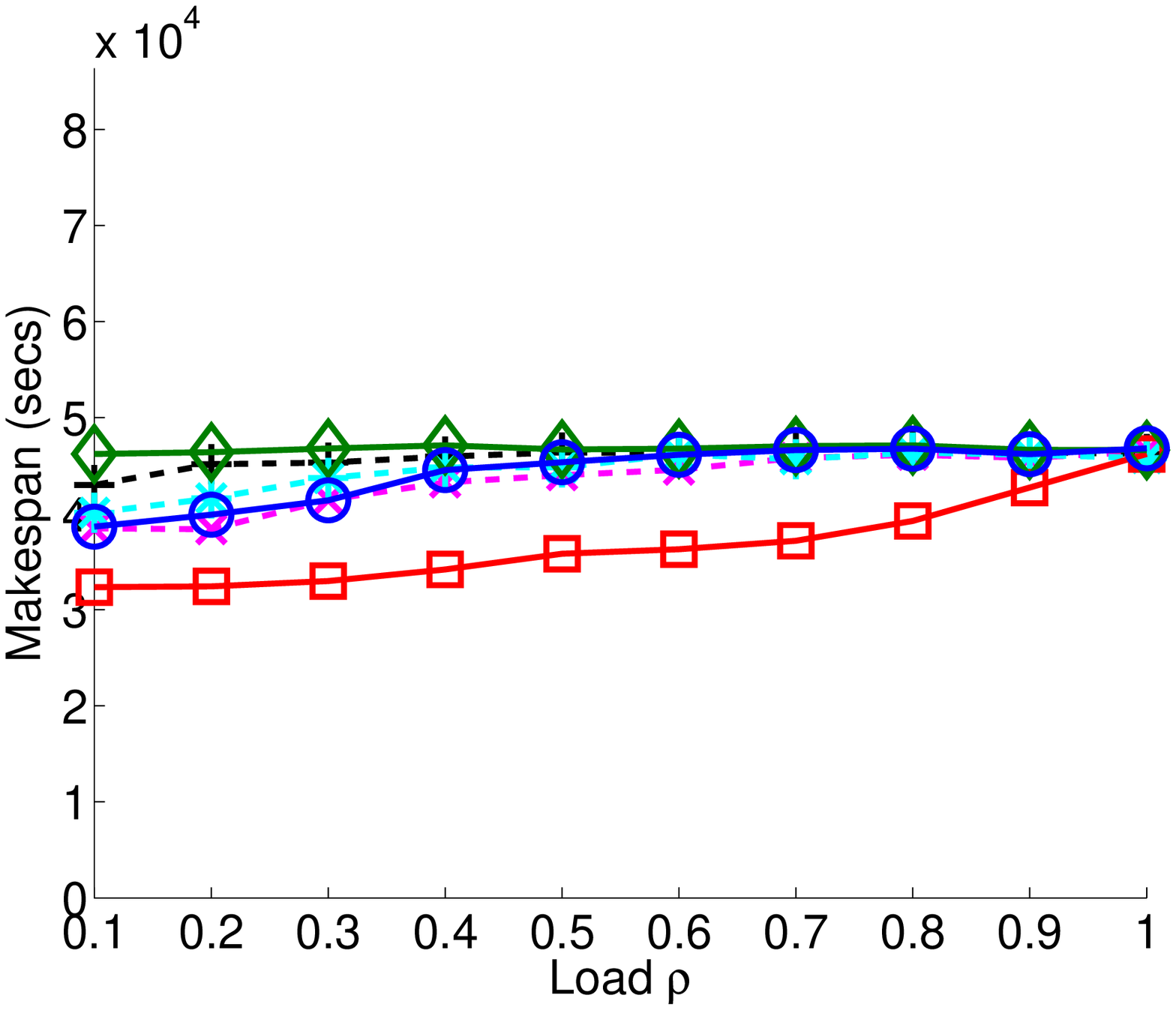}
    }
    \subfigure[]
    {
        \label{fig:single:utilization}
        \includegraphics[width=2in]{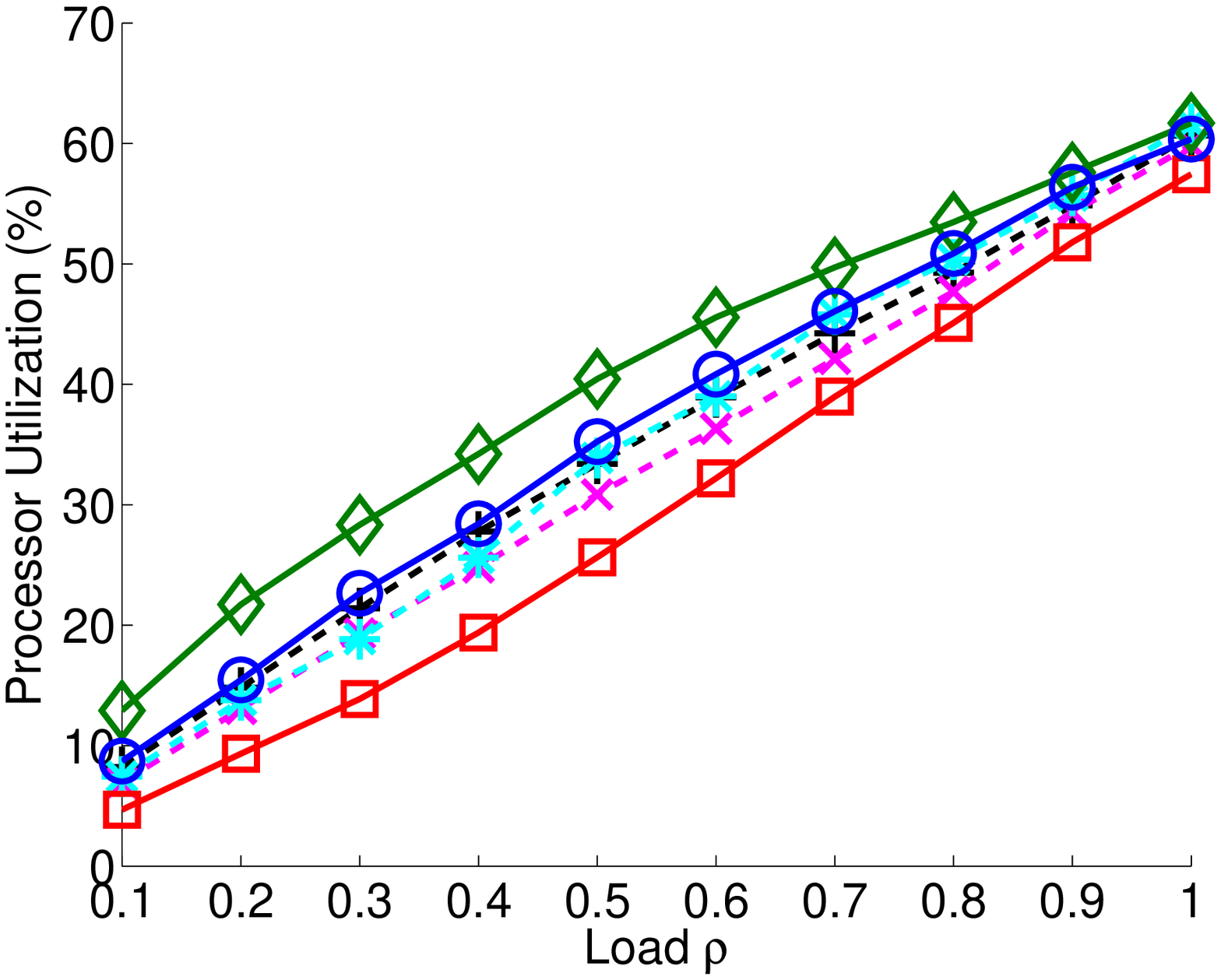}
    }
    \subfigure[]
    {
        \label{fig:single:totalEnergy}
        \includegraphics[width=2in]{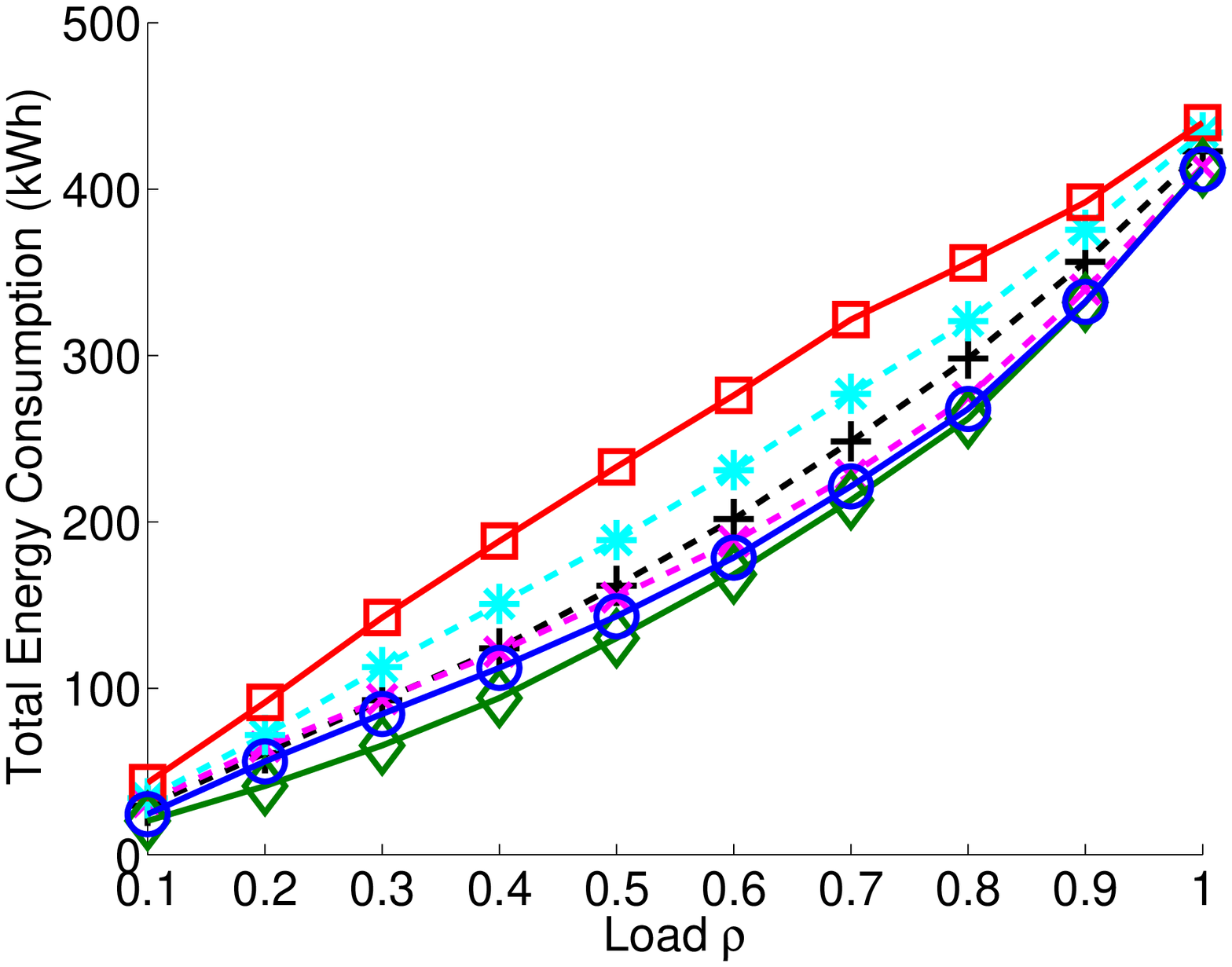}
    }
    \subfigure[]
    {
        \label{fig:single:computingEnergy}
        \includegraphics[width=2in]{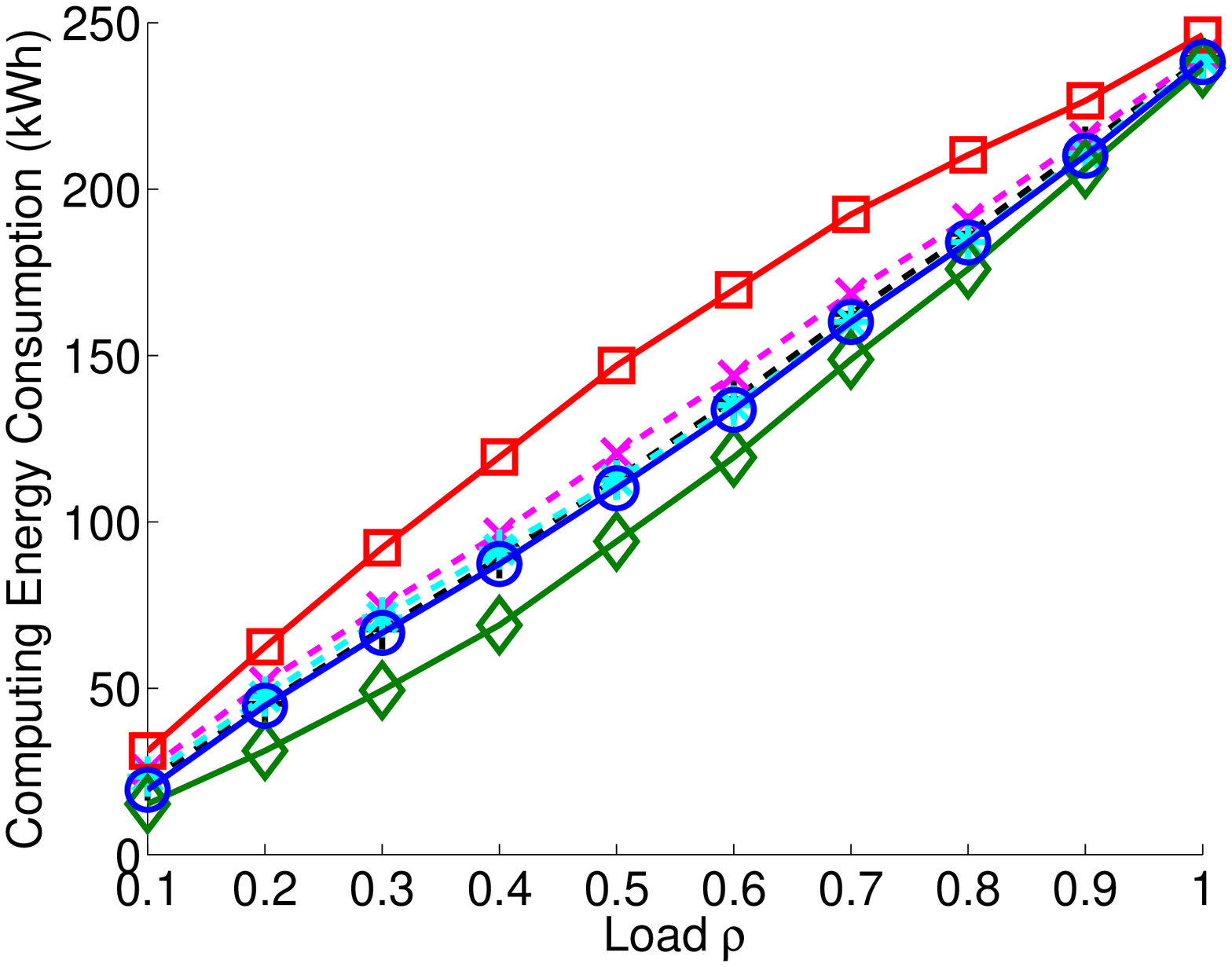}
    }
    \subfigure[]
    {
        \label{fig:single:coolingEnergy}
        \includegraphics[width=2in]{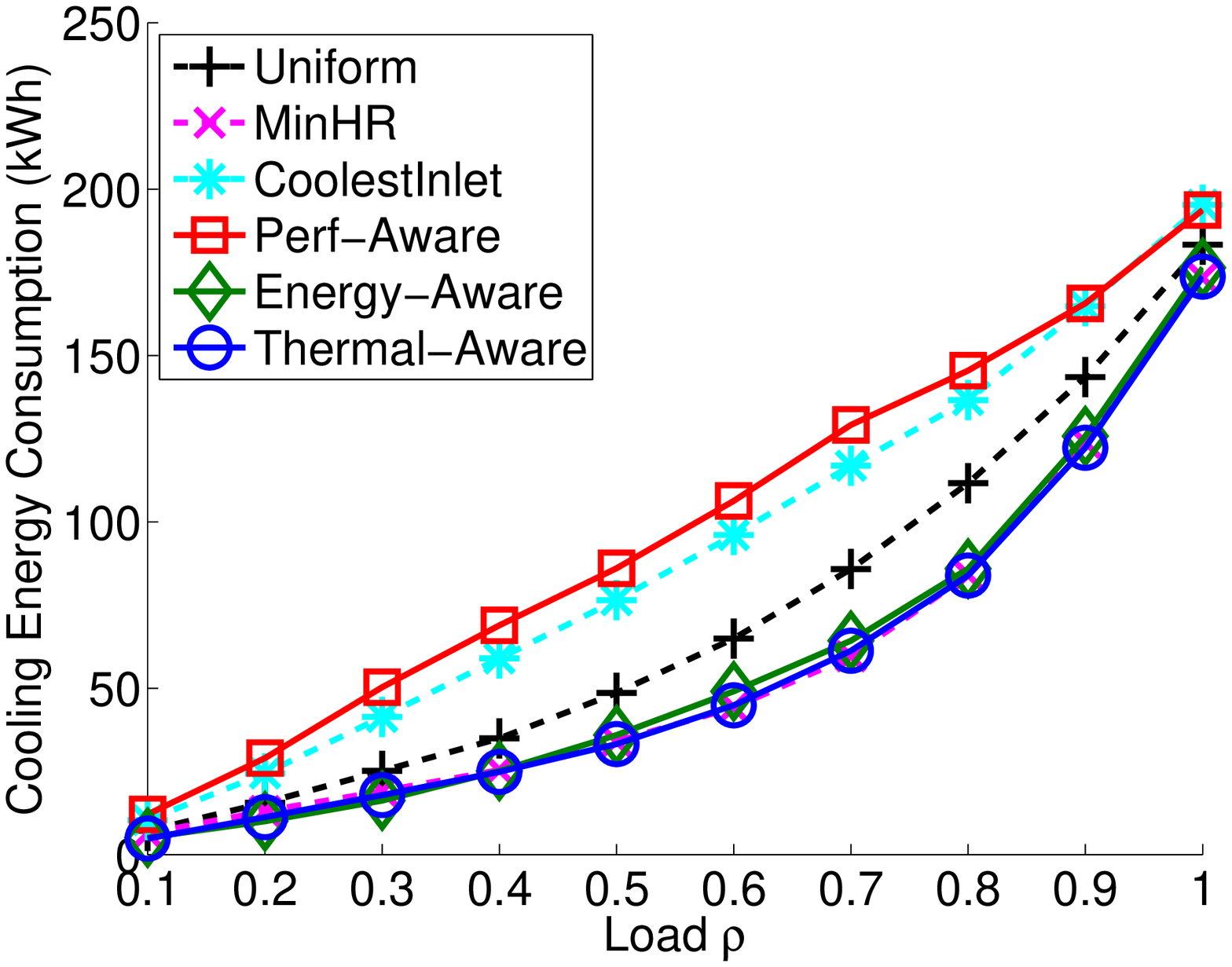}
    }
    \caption{Performance of six single-objective online scheduling heuristics. The legend applies to all subfigures.}
    \label{fig:single}
\end{figure*}

We first evaluate the online scheduling heuristics for a single objective. The results are used as
references for exploring the energy-performance tradeoff in the next subsection. In both cases, the
server placement is fixed with each type of processor occupying 10 contiguous server slots over two
racks, according to the order specified in Table \ref{tab:processors}.

Six heuristics presented in \secref{single} are evaluated, namely, \emph{Uniform}, \emph{MinHR},
\emph{CoolestInlet}, \emph{Perf-Aware}, \emph{Energy-Aware} and \emph{Thermal-Aware}.
\figref{single} presents the results of these heuristics. As we can see in \figref{single:meanRT},
\emph{Perf-Aware} has significantly better average job response time compared to the other
heuristics, especially under light system loads. This is because all jobs in \emph{Perf-Aware} are
assigned to high-performance (faster) processors before slower ones whenever possible. For the same
reason, \emph{Perf-Aware} also has better makespan (completion time of the last finished job) and
processor utilization (ratio between the utilized processor cycles and all processor cycles during
the simulation period), as shown in \figreftwo{single:makespan}{single:utilization}. Note that the
processor utilizations remain under 70\% even when the system load reaches 1. This is partly due to
the fragmented processors in some servers that cannot be utilized because a ready job simply has
higher processor requirement.

\figref{single:totalEnergy} compares the total (dynamic) energy consumption of the scheduling
heuristics, and \figreftwo{single:computingEnergy}{single:coolingEnergy} show the energy consumed
for computing and cooling, separately. For all heuristics, the energy consumption increases with
the system load or the total number of jobs in the arrival interval. \emph{Energy-Aware} consumes
less total energy compared to the other heuristics, since jobs are assigned to processors with
better energy efficiency. The improvement is more significant in terms of computing energy. For the
cooling part, \emph{MinHR} and \emph{Thermal-Aware} consumes roughly the same energy as
\emph{Energy-Aware}, since they are designed to minimize the heat recirculation and the maximum
inlet temperature, which in turn increases the supplied temperature in the room and hence directly
impacts the cooling cost. \figref{supply_temp} shows the average supply temperature of the
different scheduling heuristics in the simulation period. Indeed, \emph{Thermal-Aware} and
\emph{MinHR} are better than \emph{Energy-Aware} in terms of the average supply temperature by up
to 1.3$^{\circ}{\rm C}$ and 1.6$^{\circ}{\rm C}$, respectively.

\begin{figure}
\centering
    \includegraphics[width=2in]{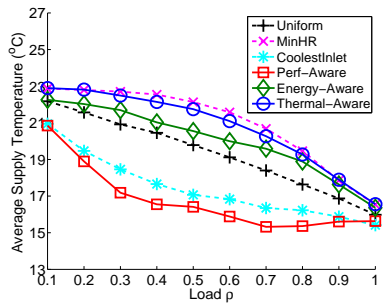}
    \caption{Average supply temperature of the heuristics.}
    \label{fig:supply_temp}
\end{figure}

\begin{figure*}[t]
\centering
    \subfigure[Load $\rho = 0.2$]
    {
        \label{fig:two1:makespan}
        \includegraphics[width=2.1in]{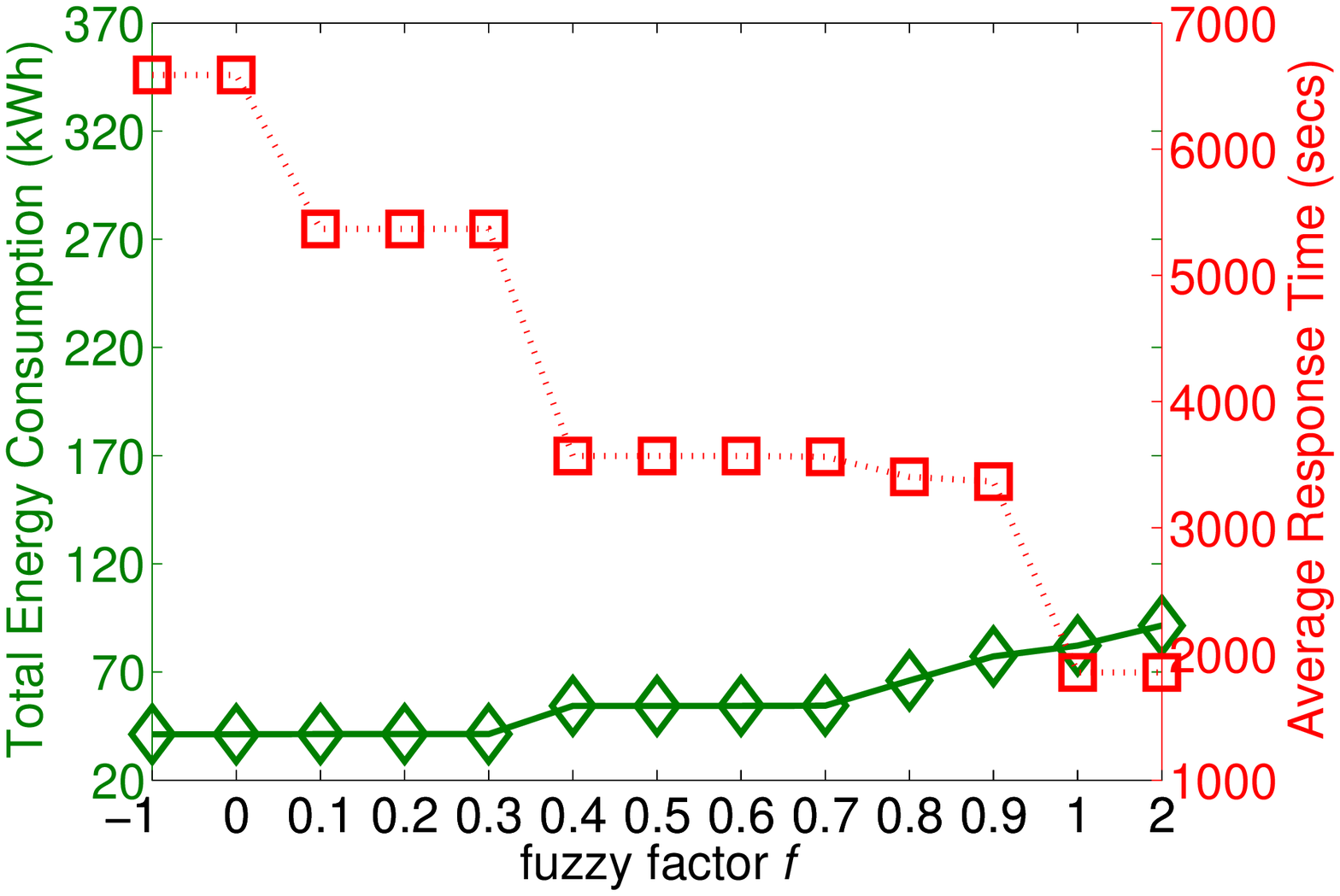}
    }
    \subfigure[Load $\rho = 0.5$]
    {
        \label{fig:two1:meanRT}
        \includegraphics[width=2.1in]{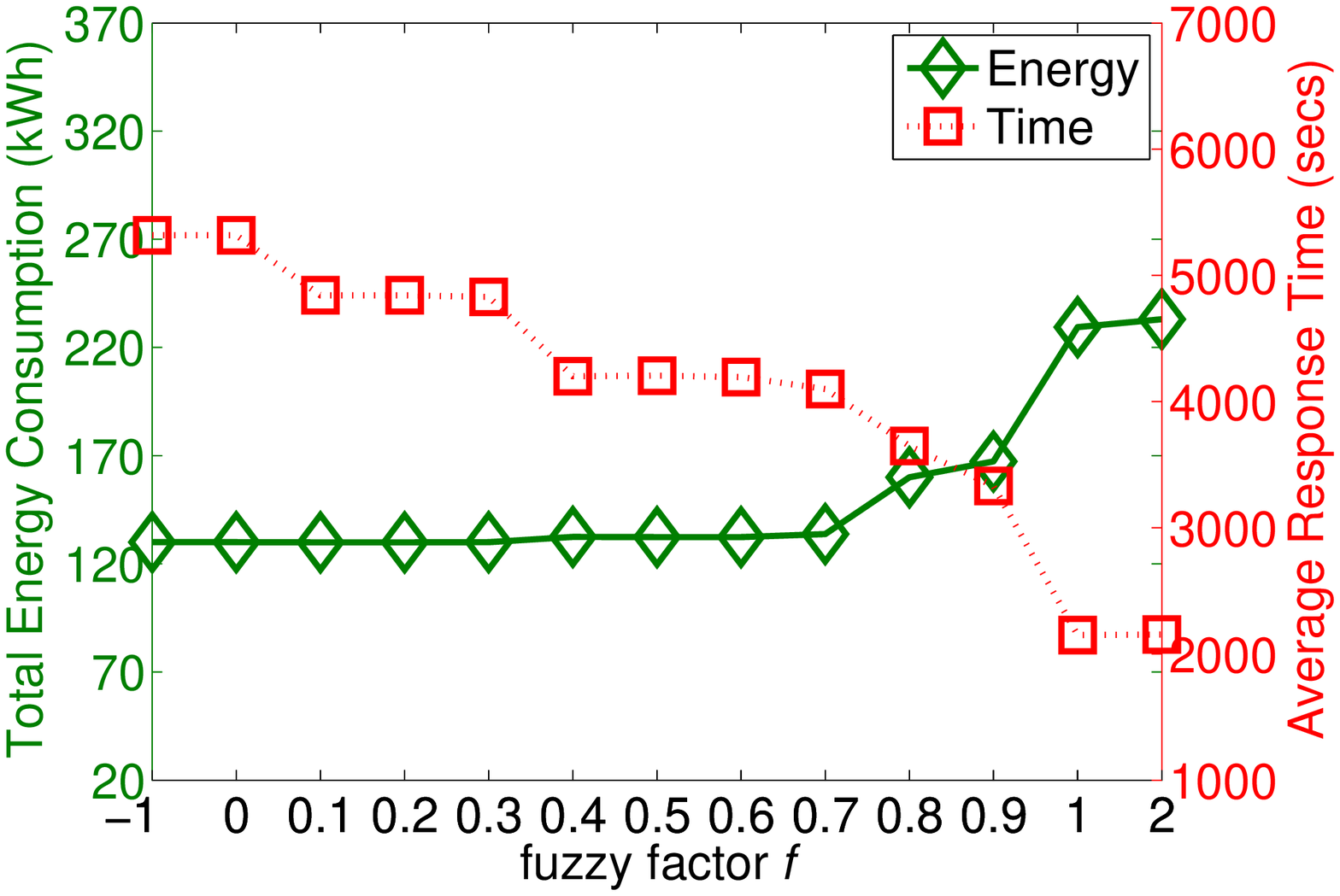}
    }
    \subfigure[Load $\rho = 0.8$]
    {
        \label{fig:two1:utilization}
        \includegraphics[width=2.1in]{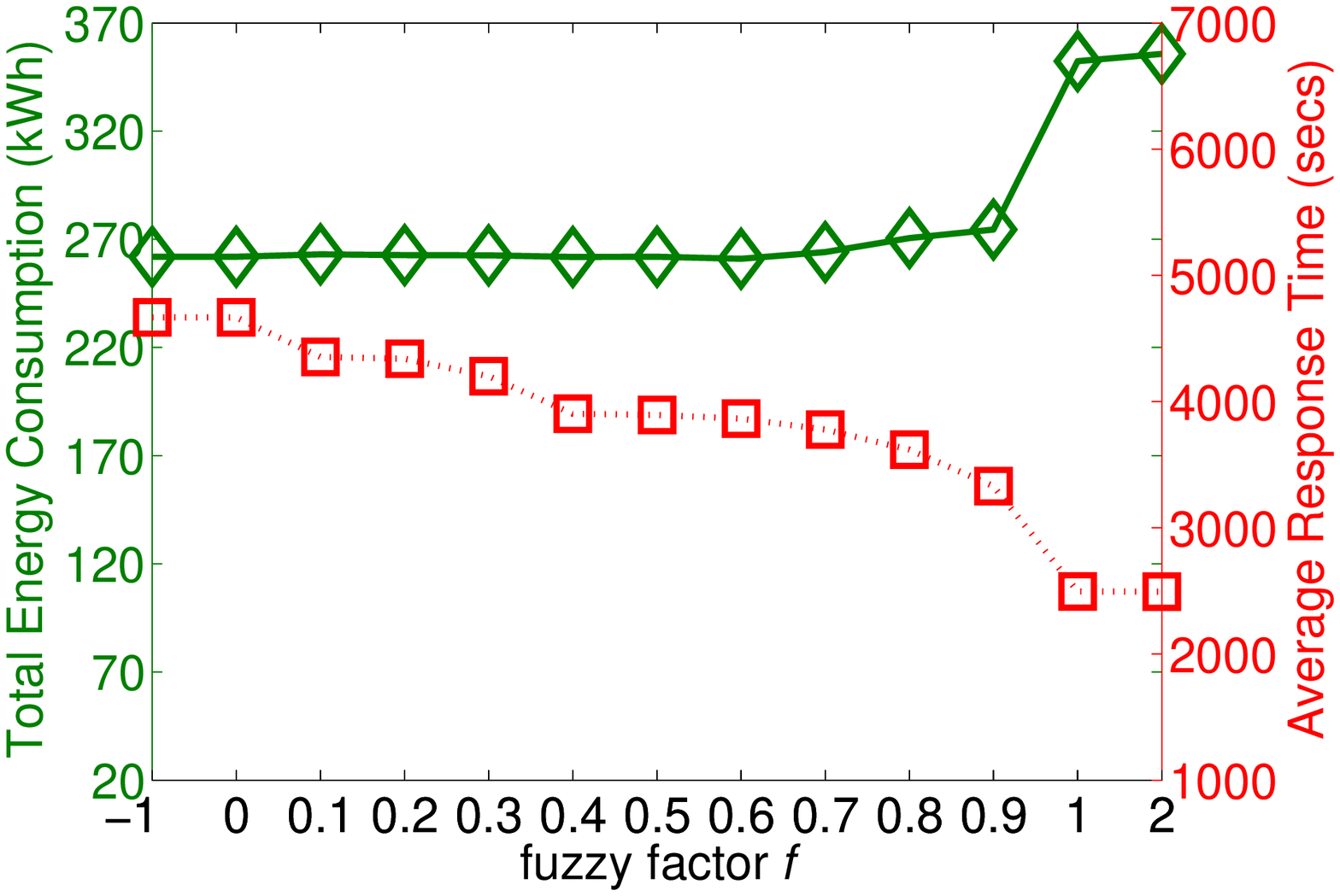}
    }
    \caption{Bi-objective scheduling for $H_{i, j}^{E, P} = \langle \overline{H}\,^{E}_{i, j}(f), H^{P}_{i, j} \rangle$ with different fuzzy factors at three system loads. The legend applies to all subfigures.}
    \label{fig:two1}
\end{figure*}

\begin{figure*}[t]
\centering
    \subfigure[Load $\rho = 0.2$]
    {
        \label{fig:two2:computingEnergy}
        \includegraphics[width=2in]{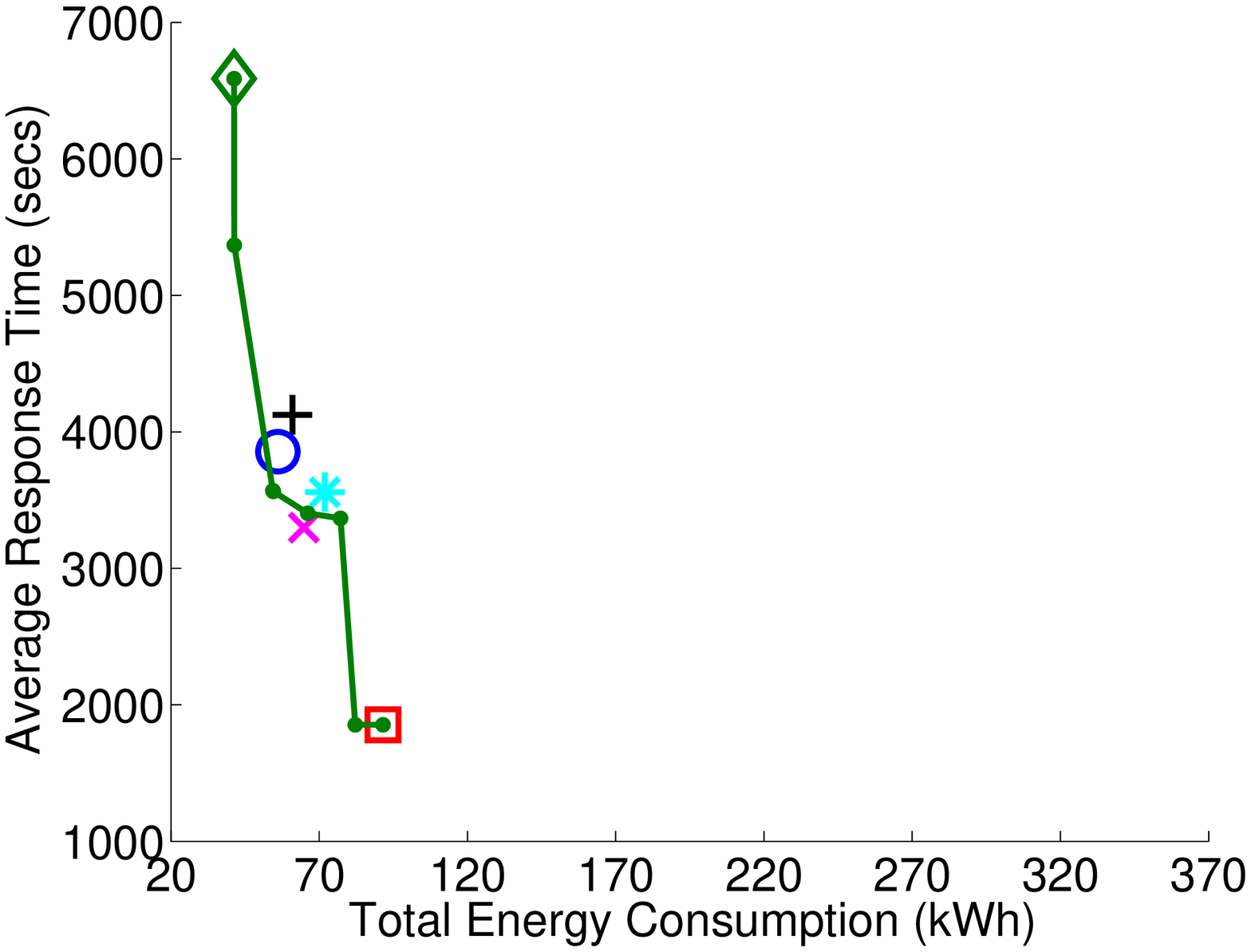}
    }
    \subfigure[Load $\rho = 0.5$]
    {
        \label{fig:two2:coolingEnergy}
        \includegraphics[width=2in]{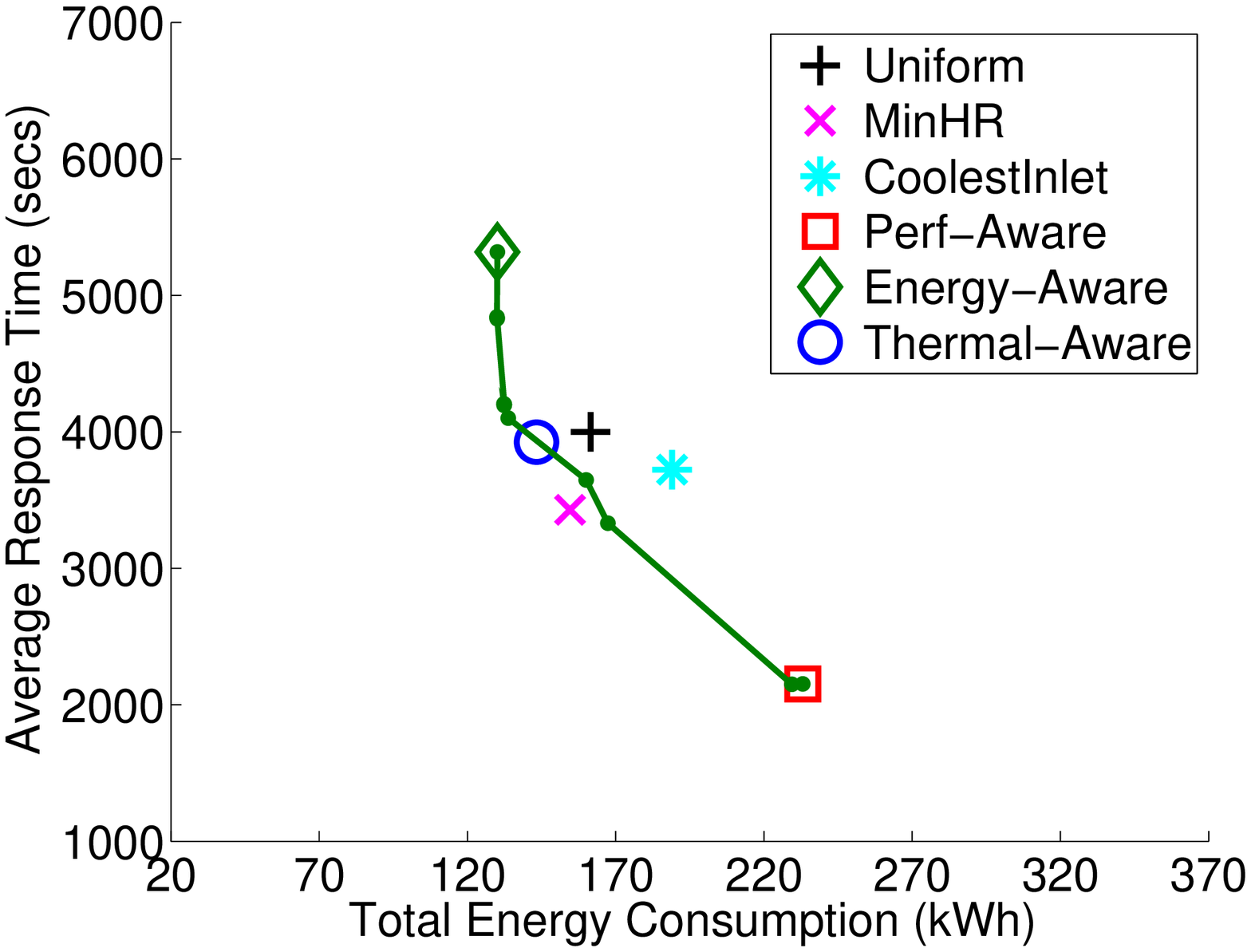}
    }
    \subfigure[Load $\rho = 0.8$]
    {
        \label{fig:two2:totalEnergy}
        \includegraphics[width=2in]{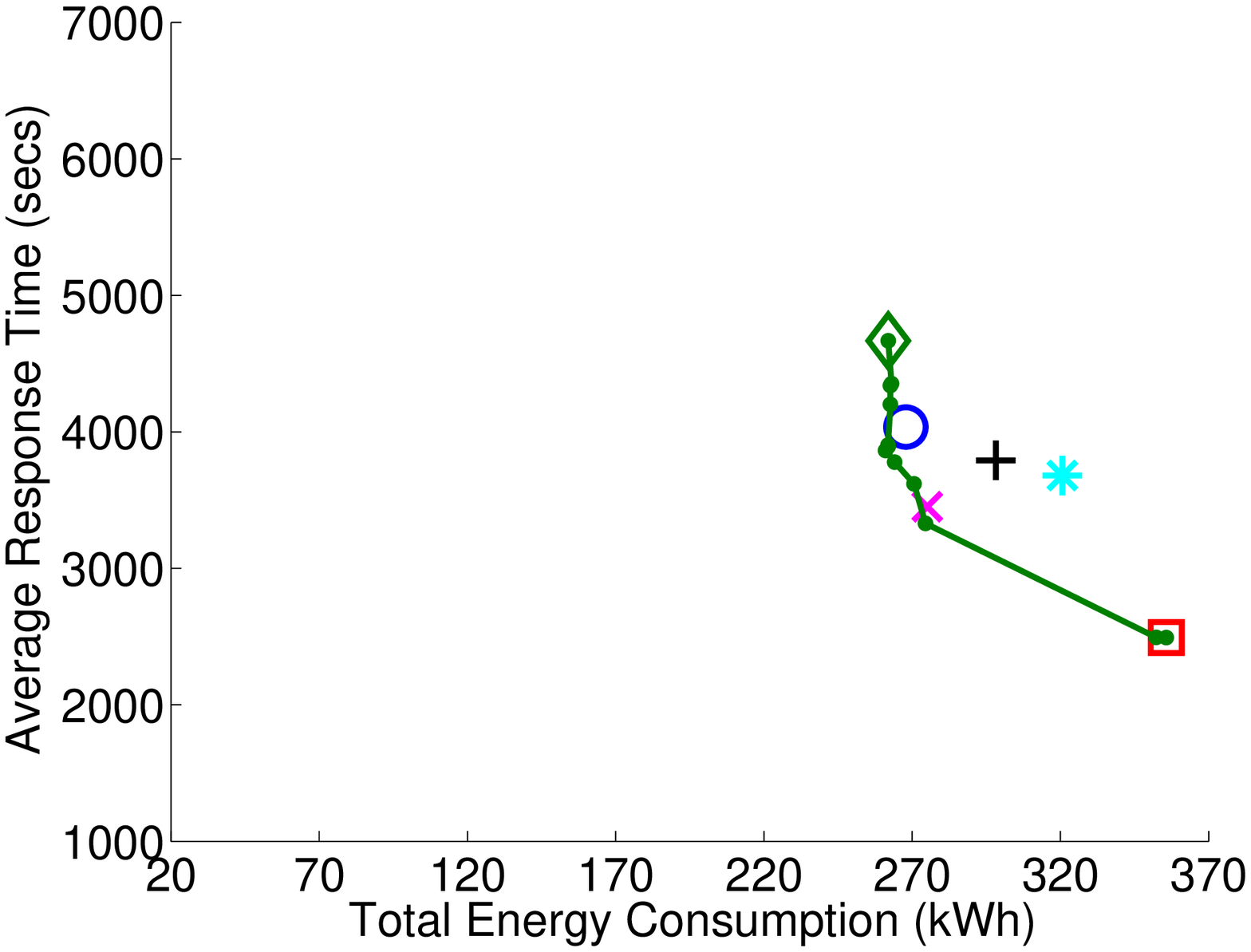}
    }
    \caption{Energy-performance tradeoff curve for $H_{i, j}^{E, P} = \langle \overline{H}\,^{E}_{i, j}(f), H^{P}_{i, j} \rangle$ at three system loads. The legend applies to all subfigures.}
    \label{fig:two2}
\end{figure*}

\begin{figure*}[t]
\centering
    \subfigure[Load $\rho = 0.2$]
    {
        \label{fig:two3:makespan}
        \includegraphics[width=2.1in]{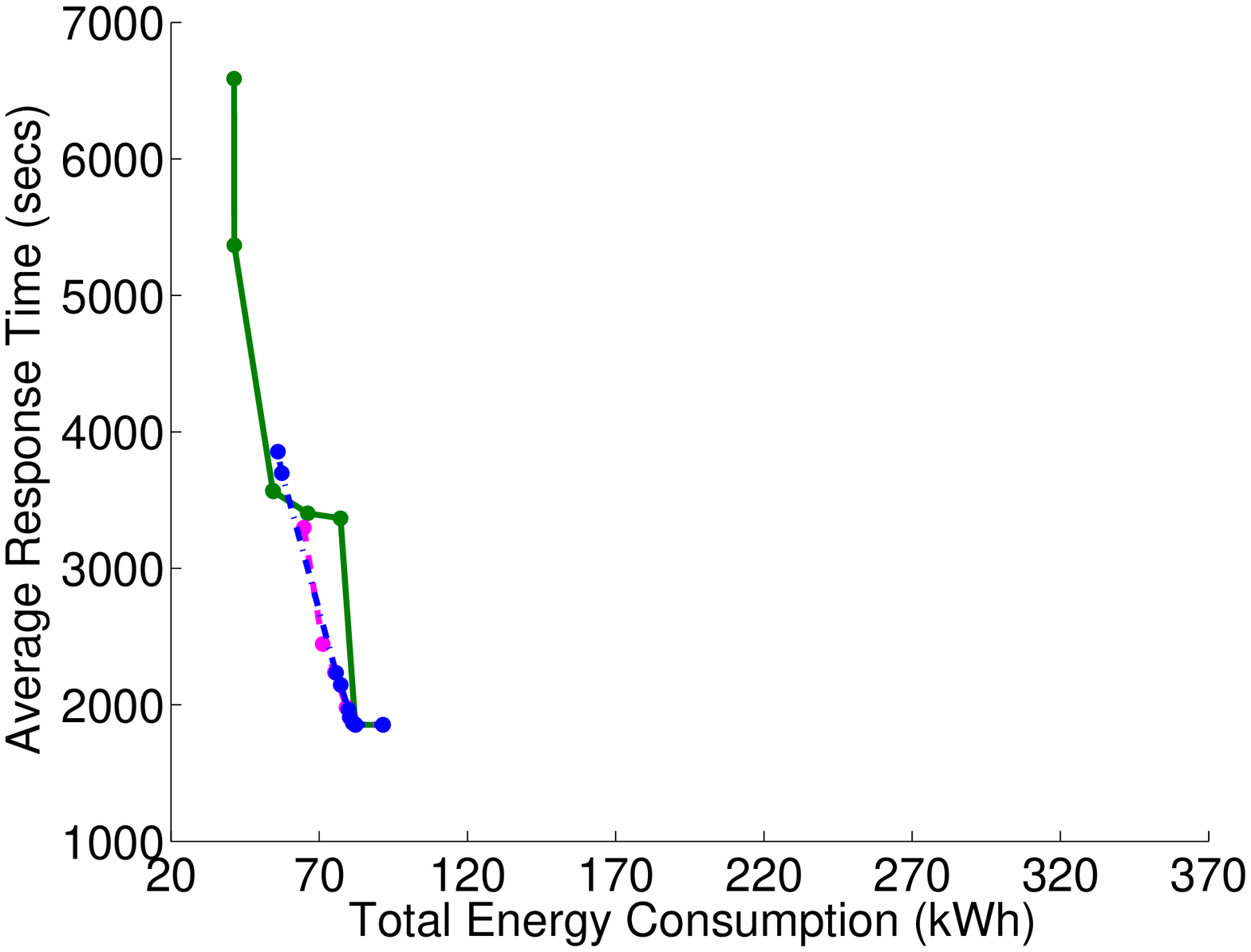}
    }
    \subfigure[Load $\rho = 0.5$]
    {
        \label{fig:two3:meanRT}
        \includegraphics[width=2.1in]{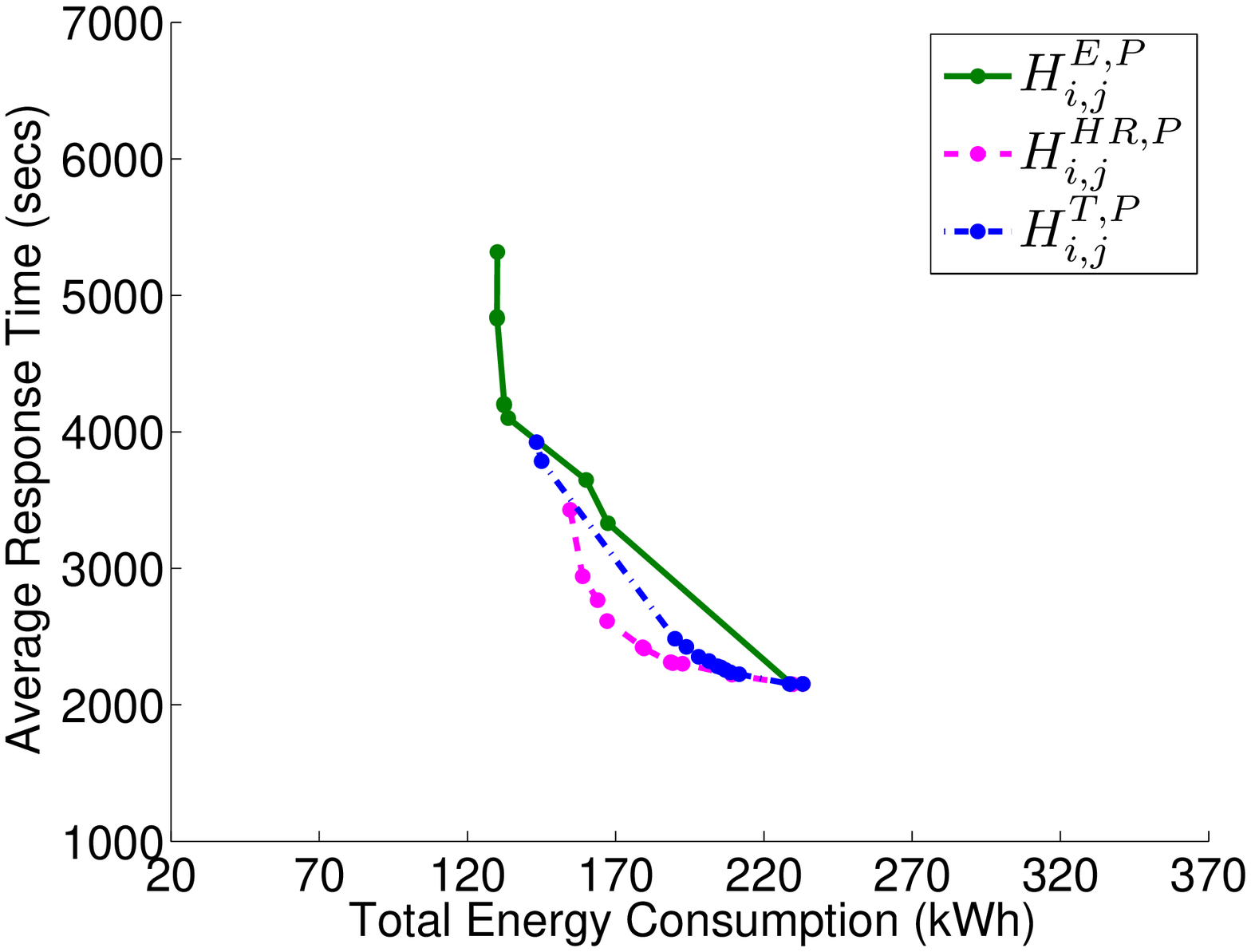}
    }
    \subfigure[Load $\rho = 0.8$]
    {
        \label{fig:two3:utilization}
        \includegraphics[width=2.1in]{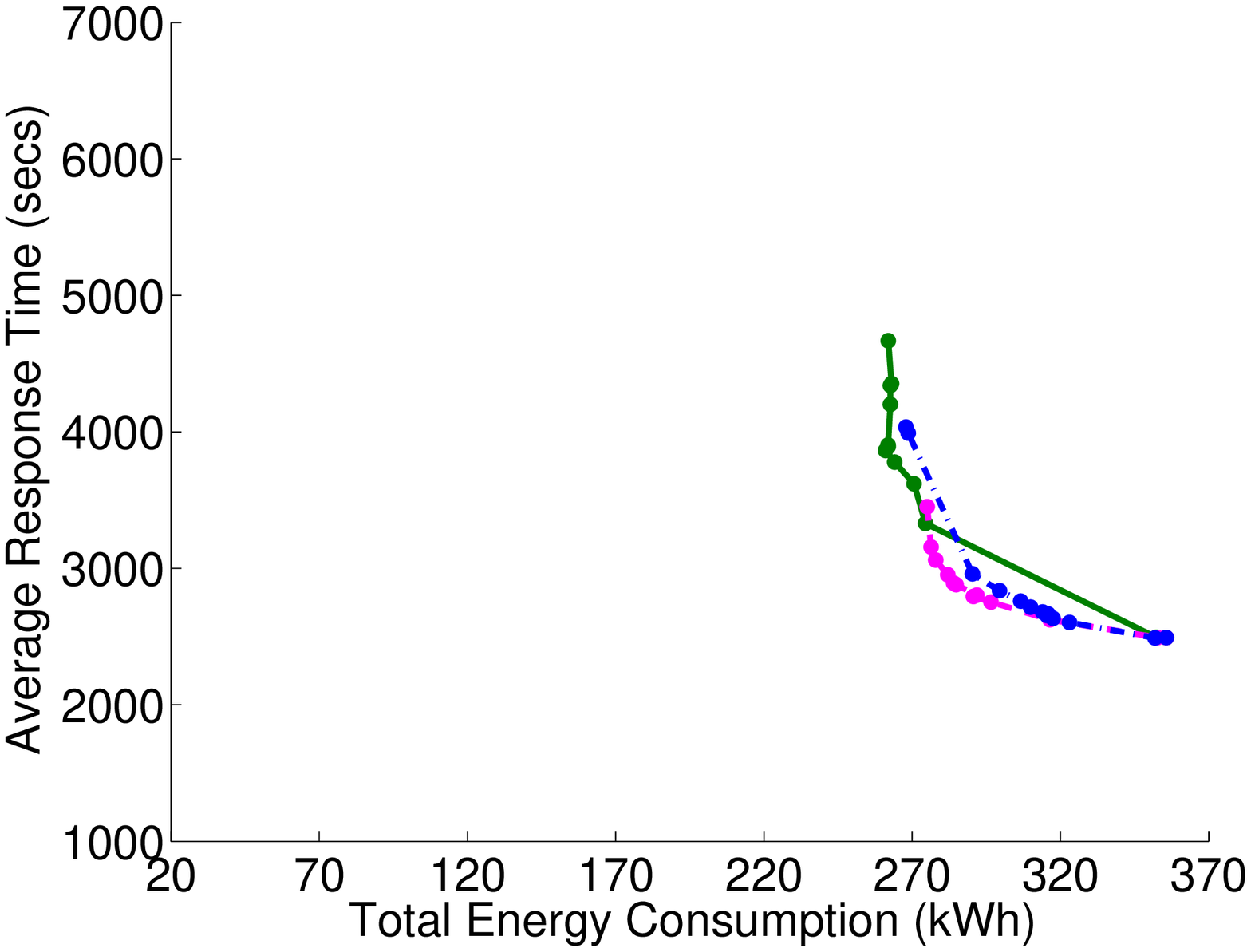}
    }
    \caption{Energy-performance tradeoff curves for $H_{i, j}^{E, P}$, $H_{i, j}^{HR, P}$ and $H_{i, j}^{T, P}$ at three system loads. The legend applies to all subfigures.}
    \label{fig:two3}
\end{figure*}

As the system load increases further and hence the processor utilization becomes higher, the
performance of all heuristics tend to converge, since all servers are roughly equally loaded under
all heuristics. In particular for \emph{Energy-Aware}, some jobs are forced to be assigned to the
high-performance servers since the energy-efficient ones are all occupied, resulting in improved
average job response time.

\subsubsection{Energy-Performance Tradeoff with Fuzzy-Based Priority}

We now evaluate the effectiveness of the fuzzy-based priority approach for exploring the
energy-performance tradeoff in online scheduling. To this end, we consider the composite cost
function $H_{i, j}^{E, P} = \langle \bar{H}^{E}_{i, j}(f), H_{i, j}^{P} \rangle$ that optimizes the
energy consumption followed by the job response time.

\figref{two1} shows the results of minimizing $H_{i, j}^{E, P}$ when the fuzzy factor $f$ is
increased from $0$ to $1$ at three different system loads ($0.2$, $0.5$ and $0.8$). The values of
both objectives are plotted as a function of $f$, with energy consumption shown on the left $Y$
axis and average response time on the right. In addition, the figure also shows the results when $f
= -1$ and $f = 2$, denoting the cases where the scheduling decision is based solely on the first
objective (energy) and the second objective (response time). The two cases are equivalent to the
single-objective heuristics \emph{Energy-Aware} and \emph{Perf-Aware}, respectively.

As we can see, the average response time improves with increased fuzzy factor at the expense of the
energy consumption under all system loads. However, the improvement can be significant even before
major compromise in energy consumption is observed. For instance, at medium load ($\rho = 0.5$),
the response time is reduced by about 1000 when $f$ reaches 0.6 without much increase in the energy
consumption. Similar results can also be observed at light load and heavy load. The fuzzy-based
priority approach can take advantage of such characteristics by setting suitable fuzzy factors in
order to achieve desirable energy-performance tradeoff in the online setting.

\figref{two2} shows the energy-performance tradeoff curve for $H_{i, j}^{E, P} = \langle
\overline{H}\,^{E}_{i, j}(f), H^{P}_{i, j} \rangle$ obtained by varying the fuzzy factor from 0 to
1. The results of the six single-objective heuristics are also shown in the figure under the
respective load. We can see that \emph{MinHR} and \emph{Thermal-Aware} lie around the curve (or
even slightly to the left of the curve in the case of \emph{MinHR}), indicating that they achieve
fairly efficient tradeoffs between job response time and energy consumption. On the other hand,
\emph{Uniform} and \emph{CoolestInlet} are completely dominated by the curve, which suggests that
they provide less attractive tradeoff results.

\figref{two3} plots the tradeoff curves achieved by optimizing the heat recirculation and the
maximum inlet temperature followed by the job response time, i.e., with cost functions $H_{i,
j}^{HR, P} = \langle \overline{H}\,^{HR}_{i, j}(f), H^{P}_{i, j} \rangle$ and $H_{i, j}^{T, P} =
\langle \overline{H}\,^{T}_{i, j}(f), H^{P}_{i, j} \rangle$. The results under three different
system loads are shown alongside the ones for $H_{i, j}^{E, P}$. The curves indicate that the two
heuristics are able to provide better tradeoffs in the medium to high energy range (e.g., between
150 and 220 for \emph{MinHR} at $\rho = 0.5$) while the tradeoff remains efficient for the cost
function $H_{i, j}^{E, P}$ when the energy consumption is close to the minimum. The results
demonstrate the flexibility of the fuzzy-based priority approach in exploring the
energy-performance tradeoff in online scheduling. The approach can be potentially applied to other
multi-objective optimization problems.

\subsubsection{Evaluation of Server Placement Strategies}\label{sec:impact_mp}

We now study the impact of server placement on the performance of the online scheduling heuristics.
Besides the simple location-based placement used in the previous evaluations, which we call LOC, we
generate three additional placements for the servers. One is based on our GSP heuristic and the
other two are based on its variations. We call the three placement configurations GSP1, GSP2 and
GSP3, respectively. The two variants (GSP2 and GSP3) are obtained in a similar fashion as GSP1. In
particular, in GSP2 the servers are sorted in ascending order of reference power instead of
descending order, and in GSP3 each server is assigned to a remaining rack slot that maximizes the
maximum inlet temperature instead of minimizing it. Apparently, these two heuristics are
counter-intuitive and are expected to provide undesirable configurations. The purpose of including
them is to demonstrate the impact of different server placements on a scheduling algorithm's
performance, especially on the cooling cost.

\figref{serverPlacement_example} shows the inlet temperature distribution of the 50 servers under
the four placement configurations. In all cases, each processor is loaded with the average power
consumption of the benchmarks shown in \tabref{benchmarks}. As we can see, GSP1 has better thermal
balance than the other configurations. Specifically, it improves LOC by about $8^{\circ}{\rm C}$ in
terms of the maximum inlet temperature and improves GSP2 and GSP3 by over $14^{\circ}{\rm C}$ and
$16^{\circ}{\rm C}$, respectively.

\begin{figure}[t]
\centering
    \includegraphics[width=3.3in]{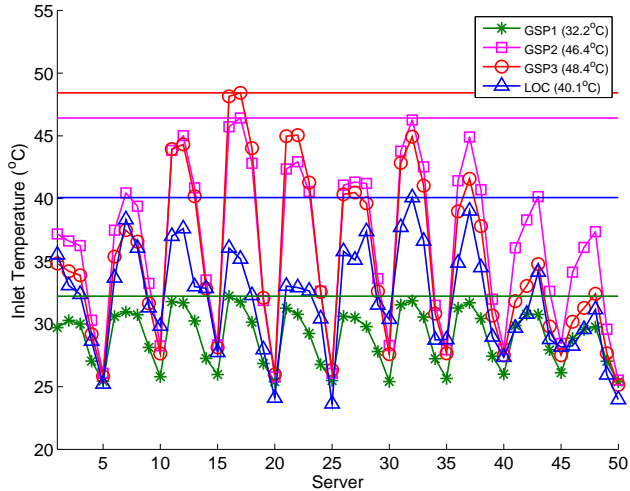}
    \caption{Inlet temperature distribution of the 50 servers under four different server placements. The maximum inlet temperature of each placement is indicated in the legend and by the horizontal line.}
    \label{fig:serverPlacement_example}
\end{figure}

\figreftwo{sp_perf}{sp_energy} show the performance of \emph{Perf-Aware} and \emph{Energy-Aware}
under the four server placements at different system loads. In both heuristics, job response time
and computing energy are not affected by different configurations. However, GSP1 has reduced
cooling energy compared to the other configurations. This is particularly evident under heavy
system load, where all servers are almost fully and equally loaded, thus their power consumption
ratios match closely those of the average values used in the server placement heuristic. Under
light system load, however, the servers could experience unbalanced loads, which causes their power
consumption ratios to deviate from those of the average values. As a result, the advantage of GSP1
becomes smaller or even diminishes, but since the overall energy consumption is small in this case,
the impact of server placement is not significant.

\begin{figure*}[t]
\centering
    \subfigure[]
    {
        \label{fig:sp_perf:meanRT}
        \includegraphics[width=2.1in]{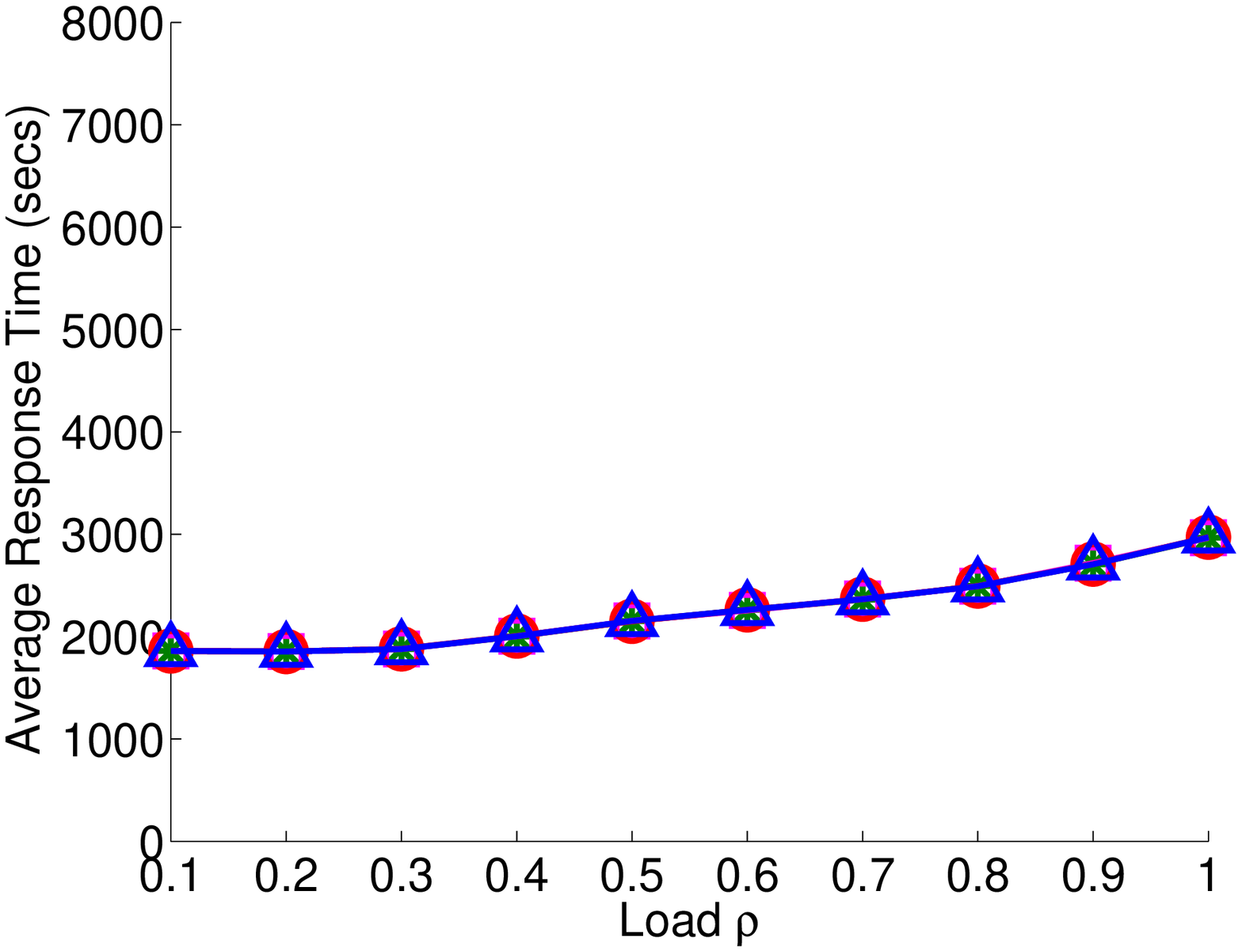}
    }
    \subfigure[]
    {
        \label{fig:sp_perf:computingEnergy}
        \includegraphics[width=2.1in]{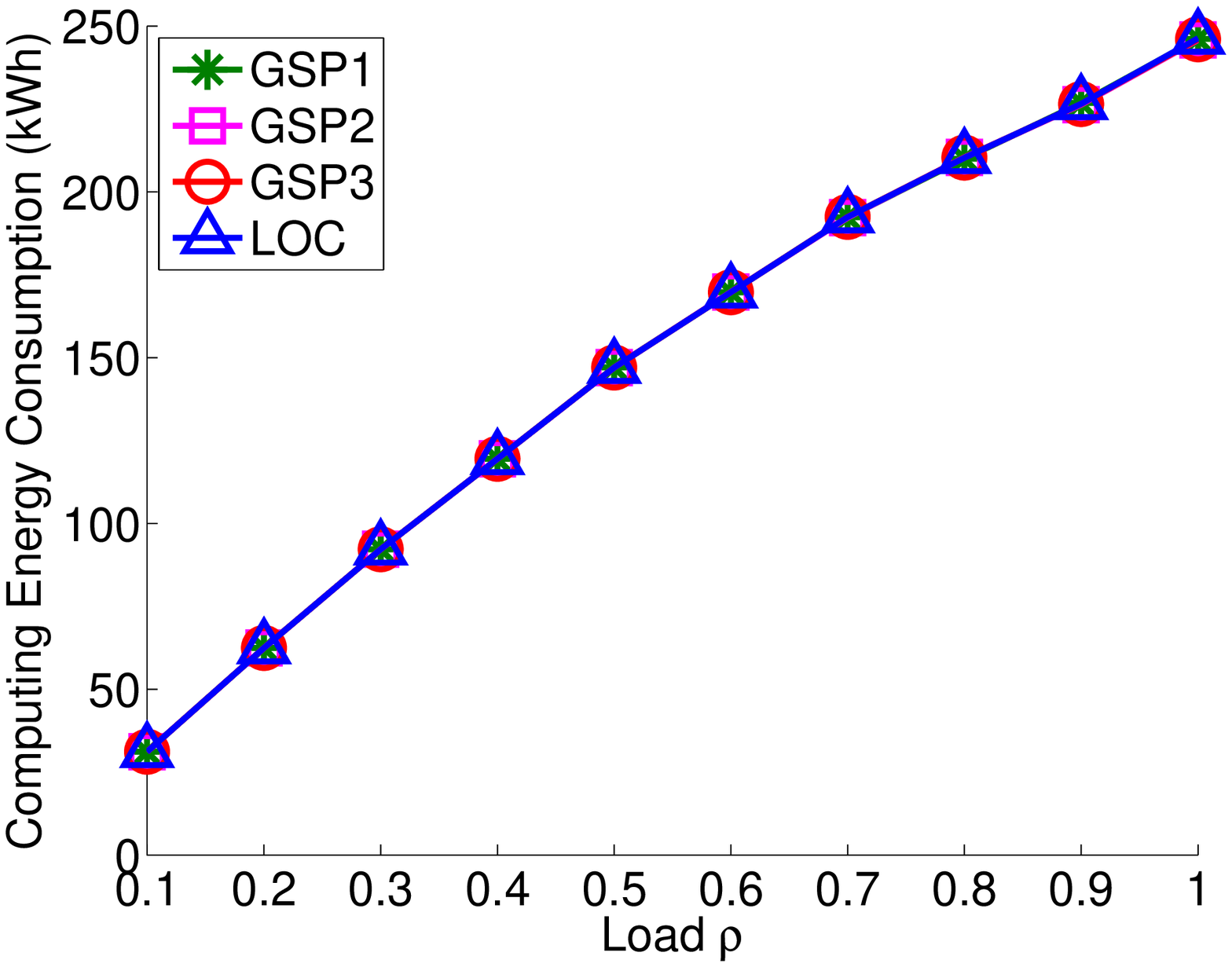}
    }
    \subfigure[]
    {
        \label{fig:sp_perf:coolingEnergy}
        \includegraphics[width=2.1in]{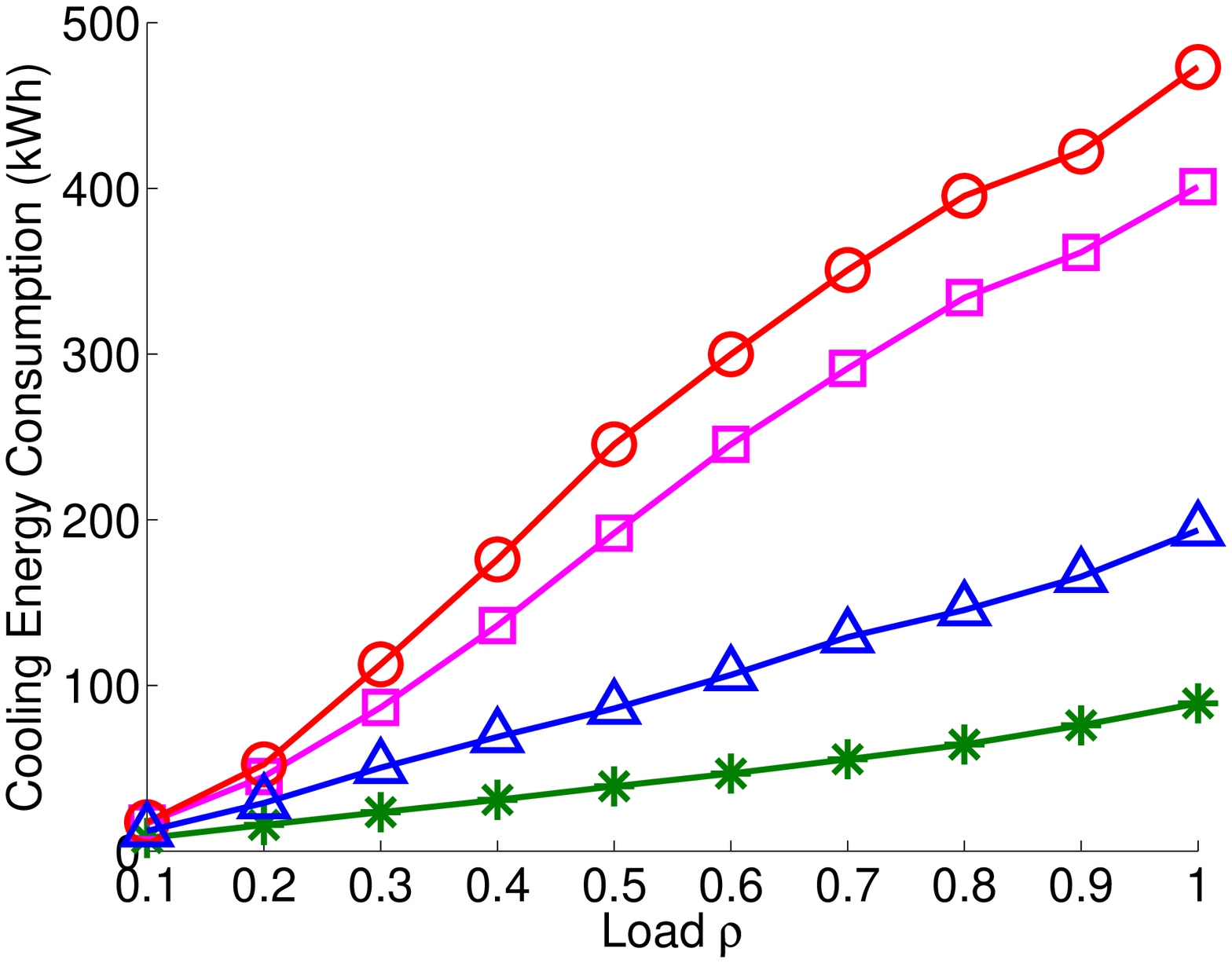}
    }
    \caption{Performance of \emph{Perf-Aware} under different server placements and system loads. The legend applies to all subfigures.}
    \label{fig:sp_perf}
\end{figure*}

\begin{figure*}[t]
\centering
    \subfigure[]
    {
        \label{fig:sp_energy:meanRT}
        \includegraphics[width=2.1in]{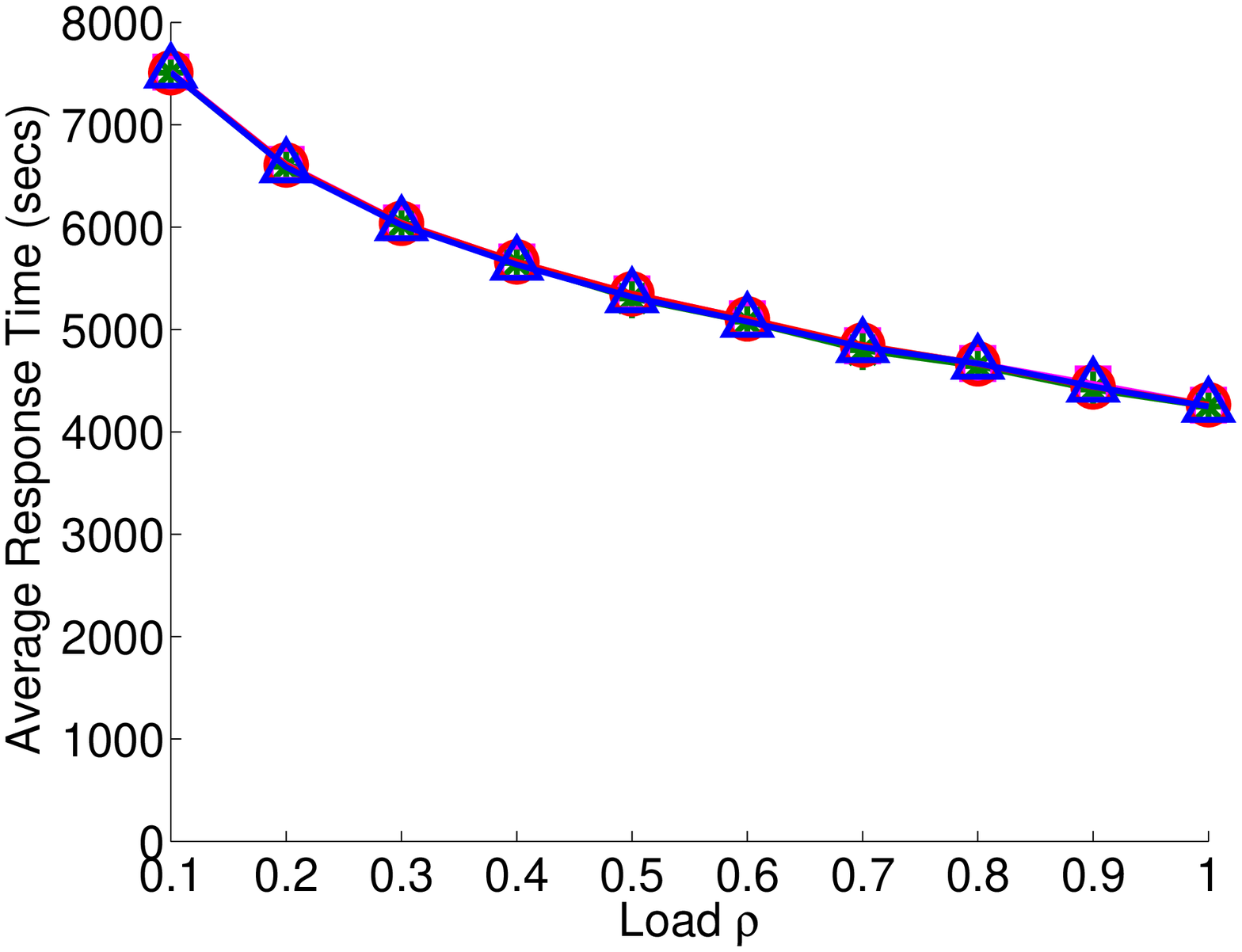}
    }
    \subfigure[]
    {
        \label{fig:sp_energy:computingEnergy}
        \includegraphics[width=2.1in]{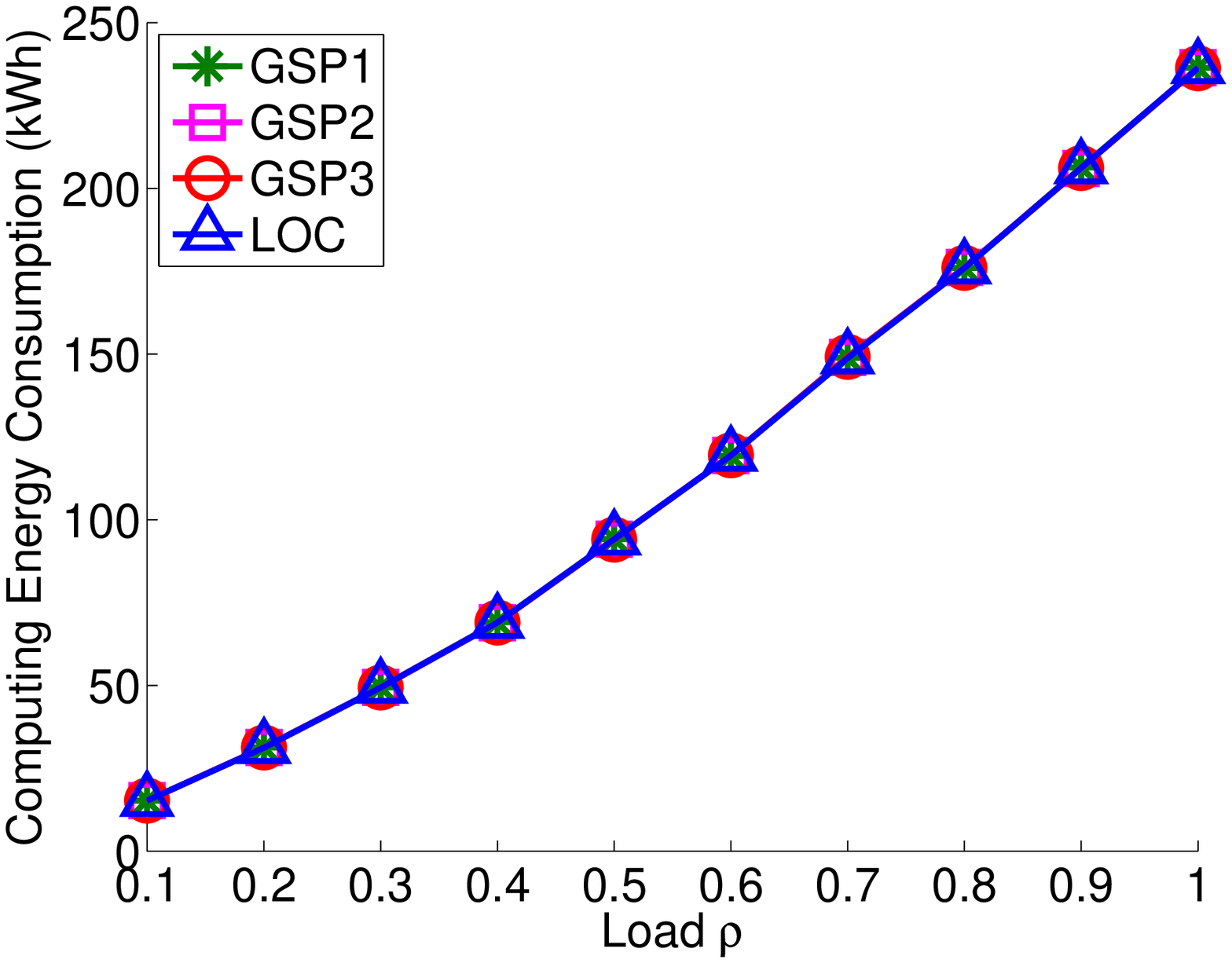}
    }
    \subfigure[]
    {
        \label{fig:sp_energy:coolingEnergy}
        \includegraphics[width=2.1in]{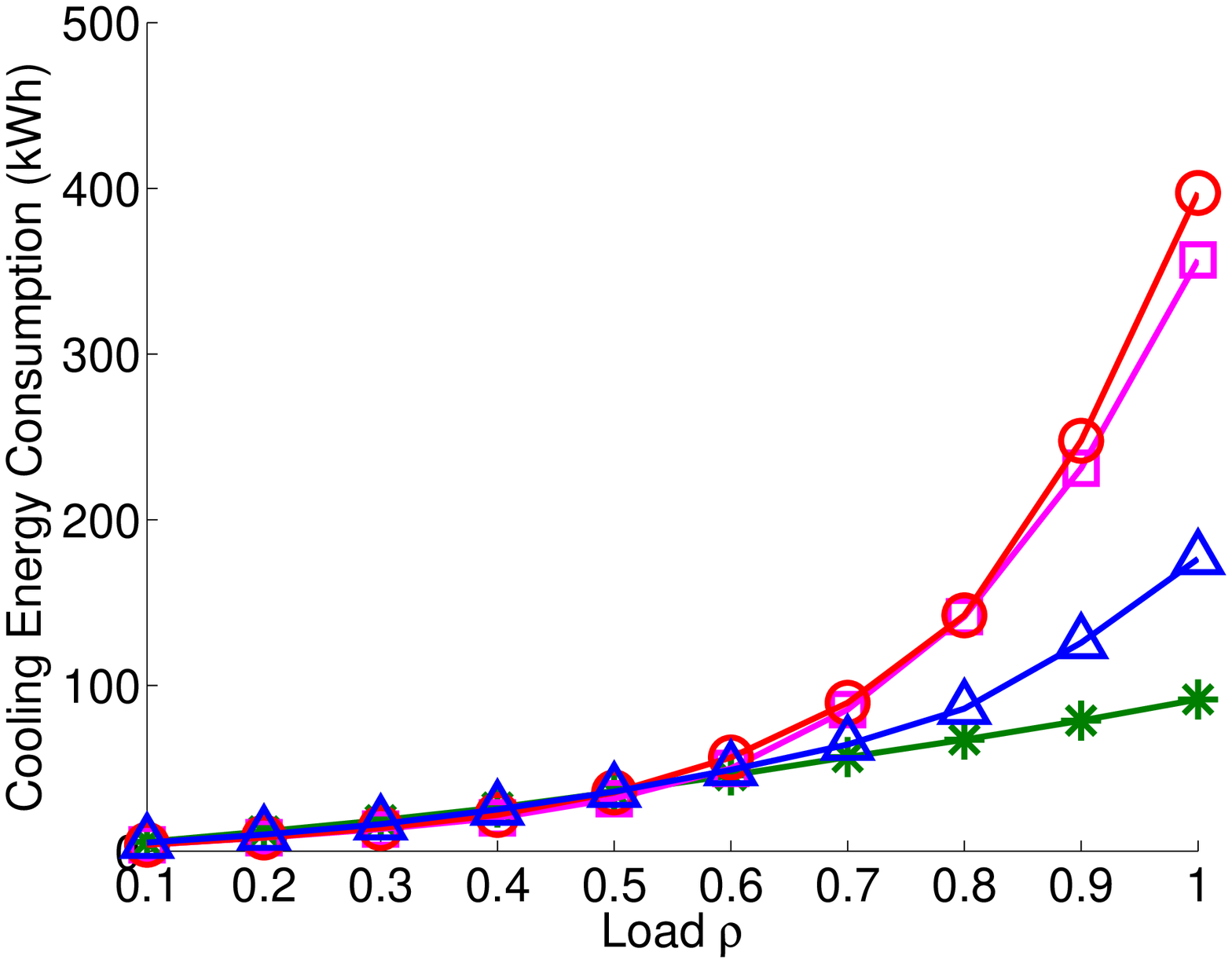}
    }
    \caption{Performance of \emph{Energy-Aware} under different server placements and system loads. The legend applies to all subfigures.}
    \label{fig:sp_energy}
\end{figure*}

Quite similar effect on the cooling energy can be observed for \emph{Thermal-Aware} and
\emph{MinHR} as shown in \figreftwo{sp_thermal}{sp_minHR}. Notice that, for these two heuristics,
different server placements also lead to a tradeoff between job response time and computing energy.
To further investigate the tradeoff efficiency, \figref{tradeoff_mix} shows the energy-performance
tradeoff curves for three heuristics with cost functions $H_{i, j}^{E, P}$, $H_{i, j}^{HR, P}$ and
$H_{i, j}^{T, P}$ at load $\rho = 0.8$ under different server placements. We can see that, although
the tradeoff remains, in all cases GSP1 provides the best cooling energy and hence improves the
overall tradeoff efficiency. Note that \emph{MinHR} and \emph{Perf-Aware} behave exactly the same
under GSP1, since servers with faster processors and hence more power consumptions are placed in
the slots with less heat recirculation. Therefore, the same performance and energy are observed for
$H_{i, j}^{HR, P}$ regardless of the fuzzy factor, as shown in \figref{tradeoff_mix:minHR}.

The results confirm that strategic server placement indeed improves the thermal balance in a
heterogeneous datacenter, which helps reduce the cooling cost. This is achieved with little impact
on the job response time and computing energy, or the tradeoff between them.

\begin{figure*}[t]
\centering
    \subfigure[]
    {
        \label{fig:sp_thermal:meanRT}
        \includegraphics[width=2.1in]{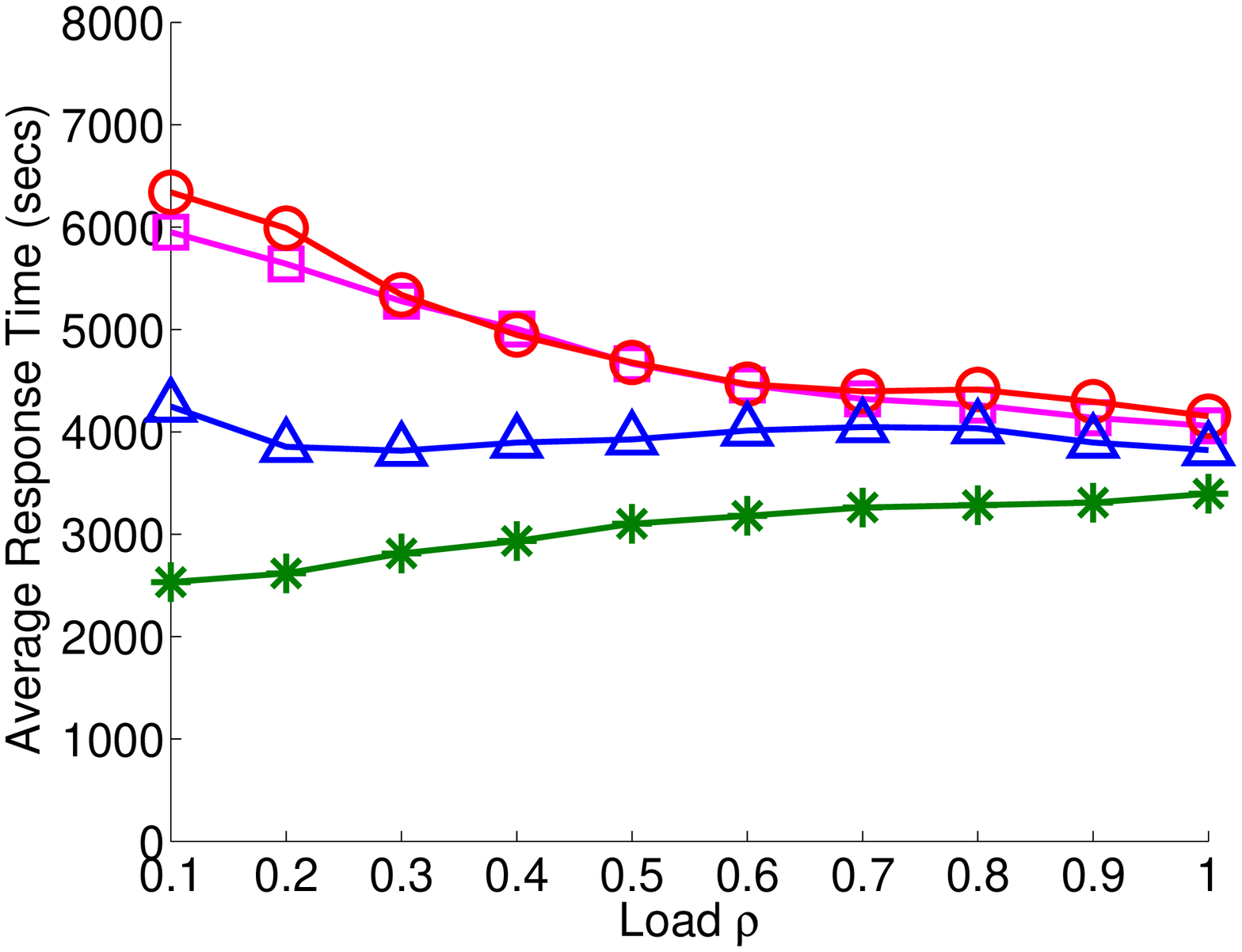}
    }
    \subfigure[]
    {
        \label{fig:sp_thermal:computingEnergy}
        \includegraphics[width=2.1in]{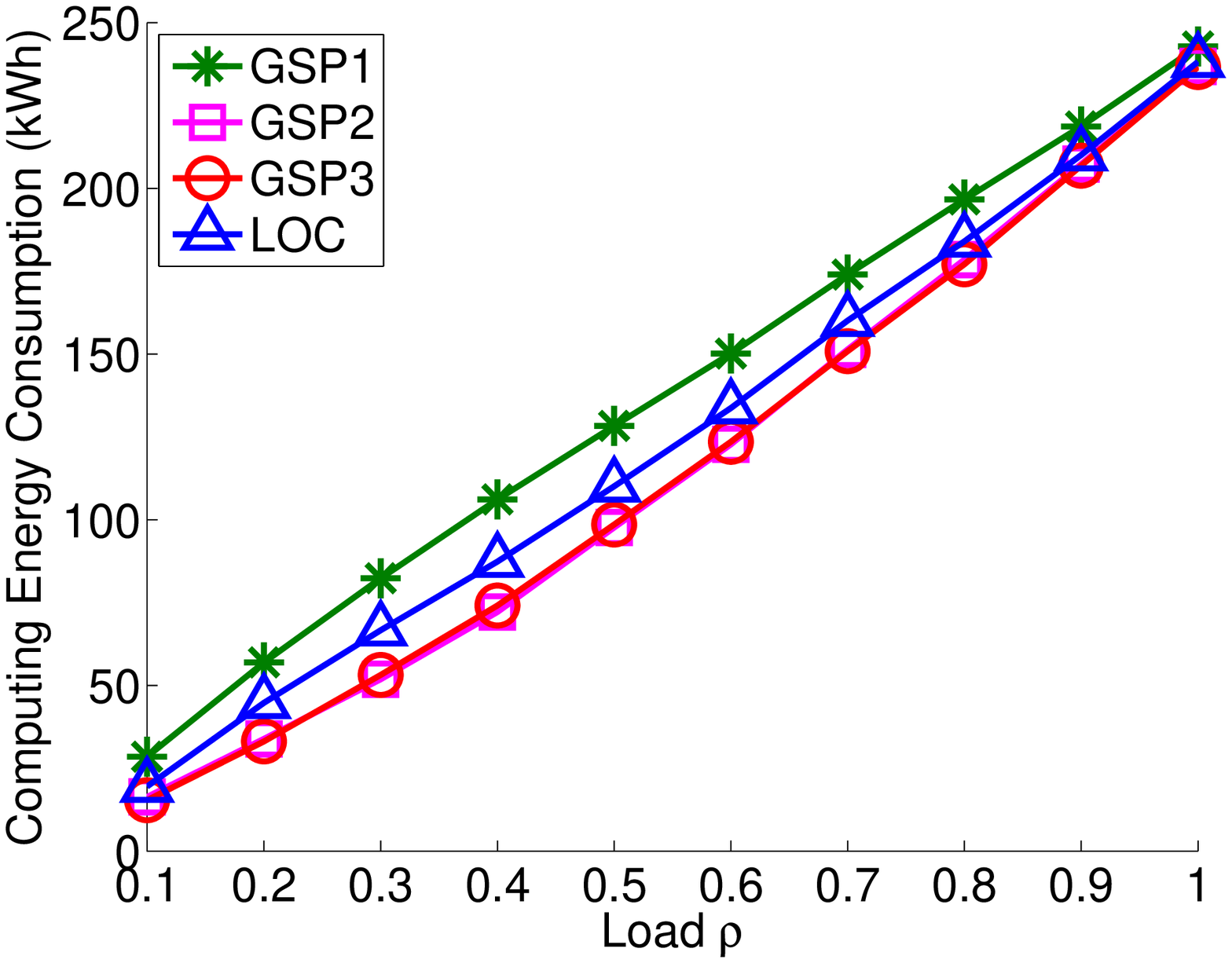}
    }
    \subfigure[]
    {
        \label{fig:sp_thermal:coolingEnergy}
        \includegraphics[width=2.1in]{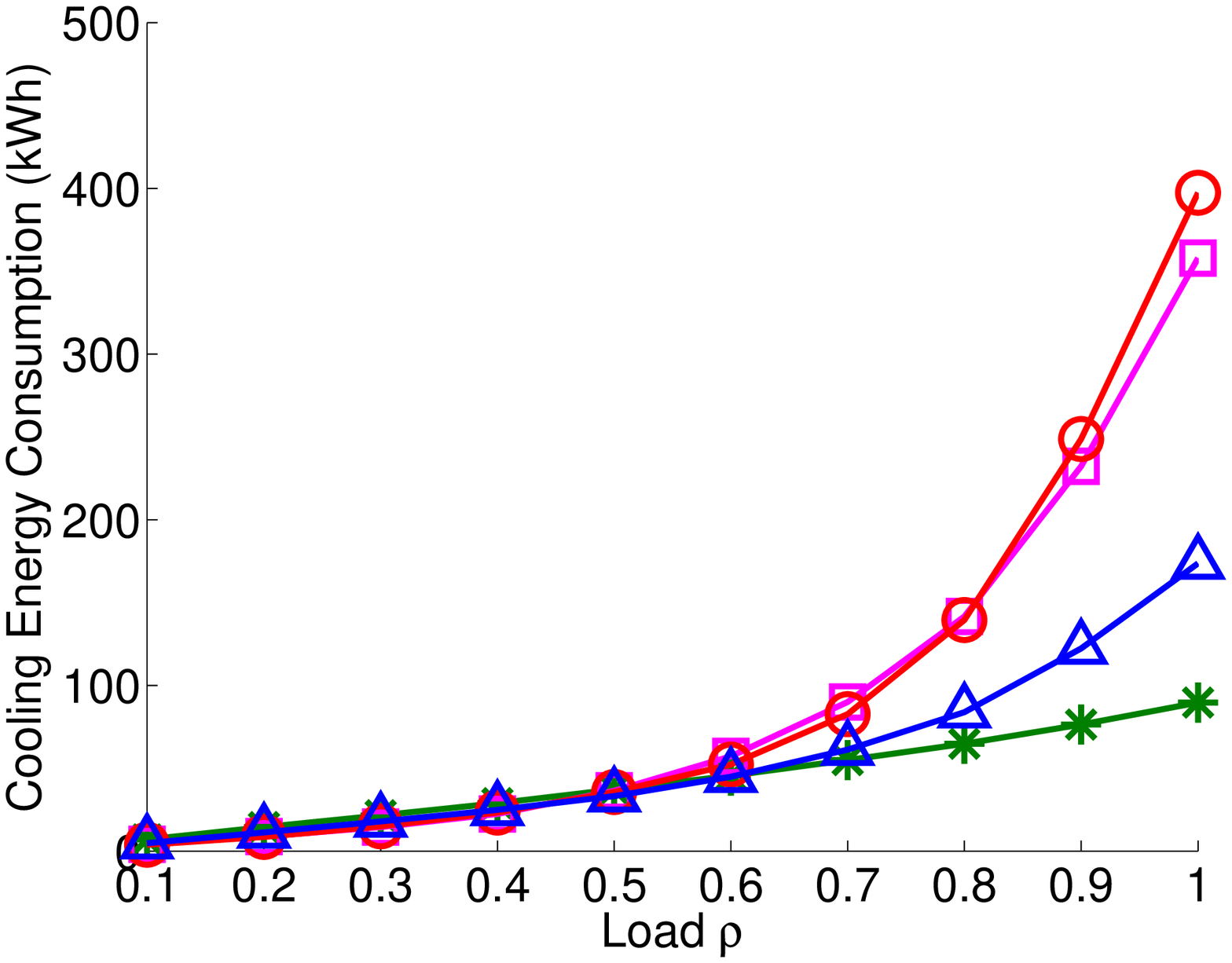}
    }
    \caption{Performance of \emph{Thermal-Aware} under different server placements and system loads. The legend applies to all subfigures.}
    \label{fig:sp_thermal}
\end{figure*}

\begin{figure*}[t]
\centering
    \subfigure[]
    {
        \label{fig:sp_minHR:meanRT}
        \includegraphics[width=2.1in]{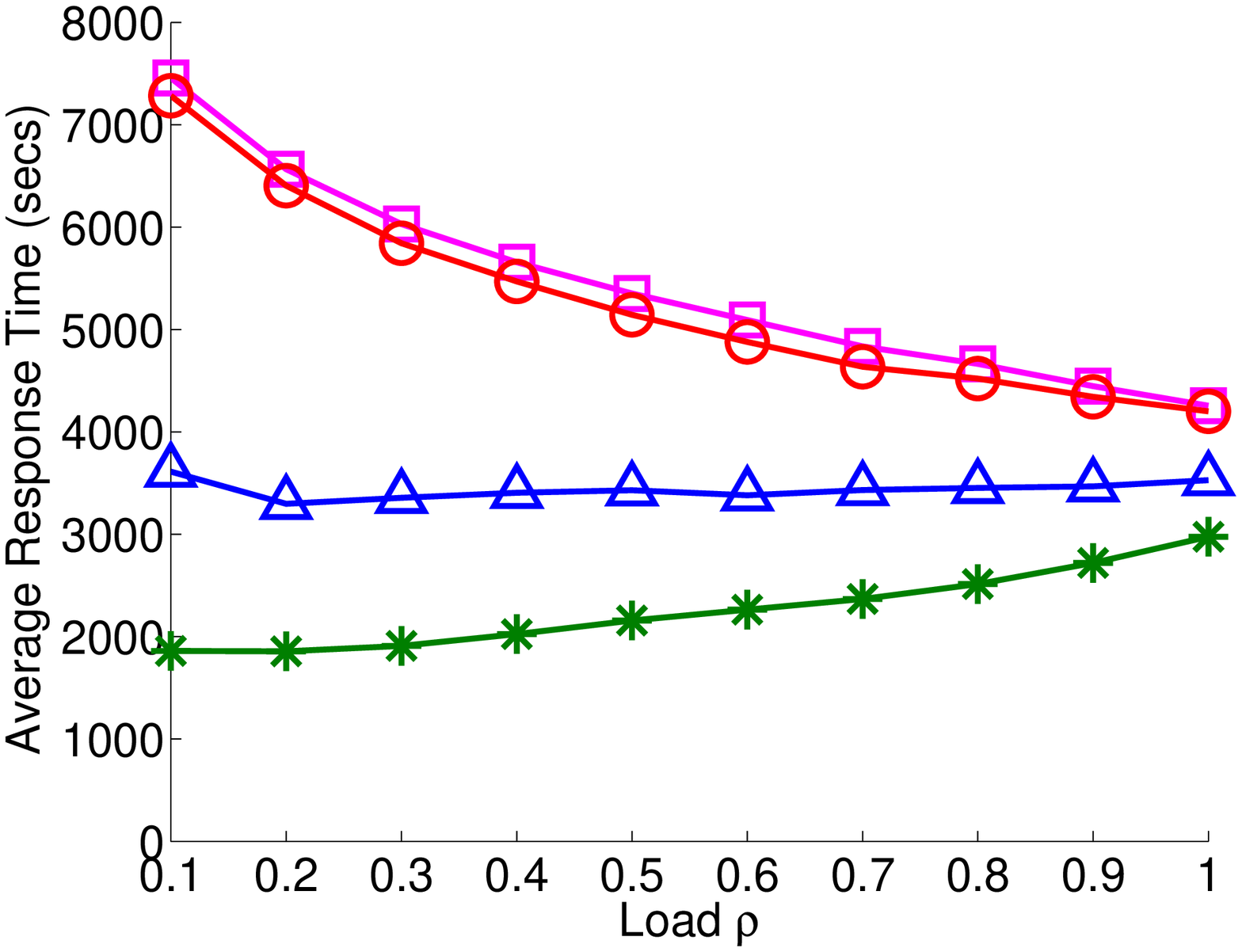}
    }
    \subfigure[]
    {
        \label{fig:sp_minHR:computingEnergy}
        \includegraphics[width=2.1in]{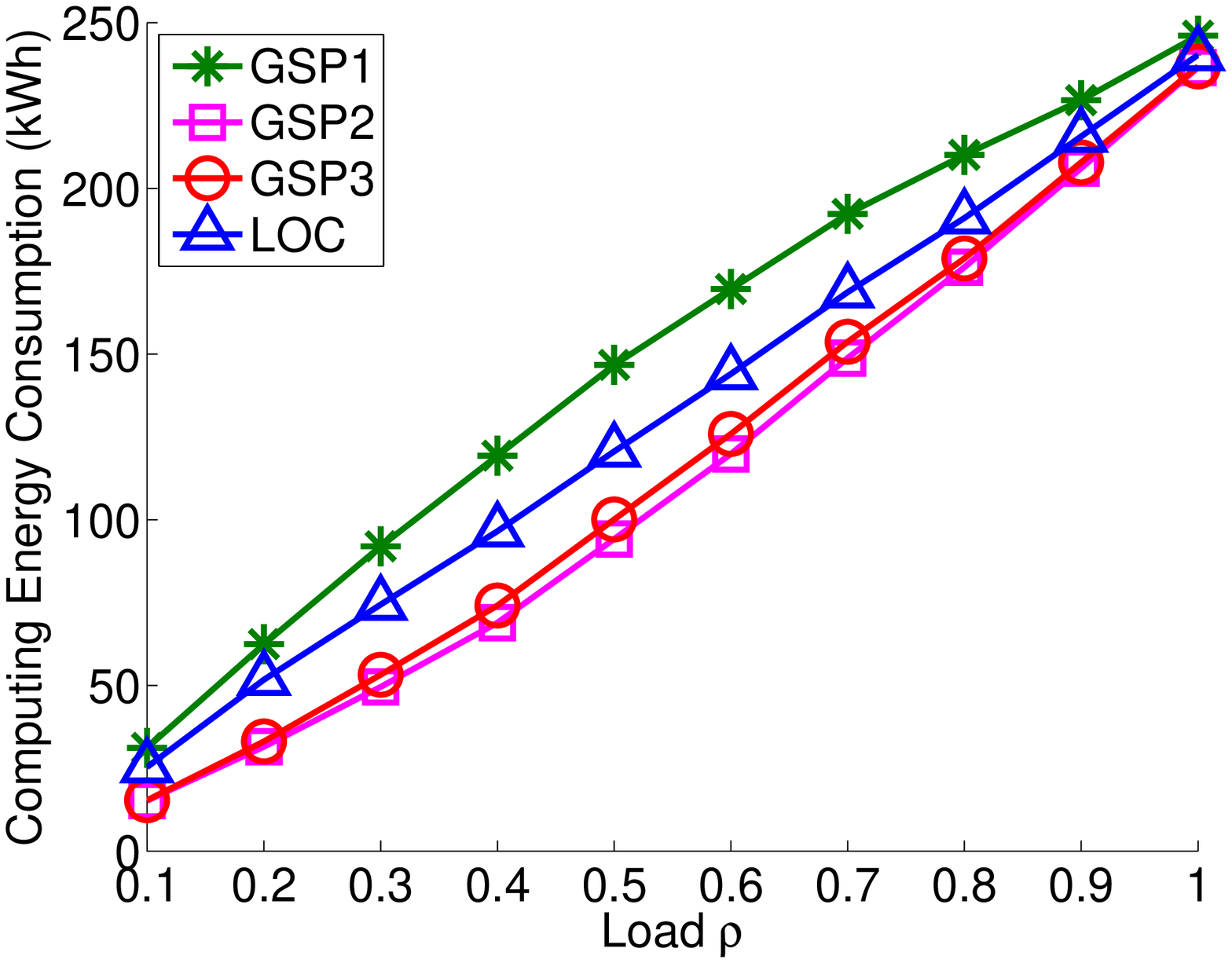}
    }
    \subfigure[]
    {
        \label{fig:sp_minHR:coolingEnergy}
        \includegraphics[width=2.1in]{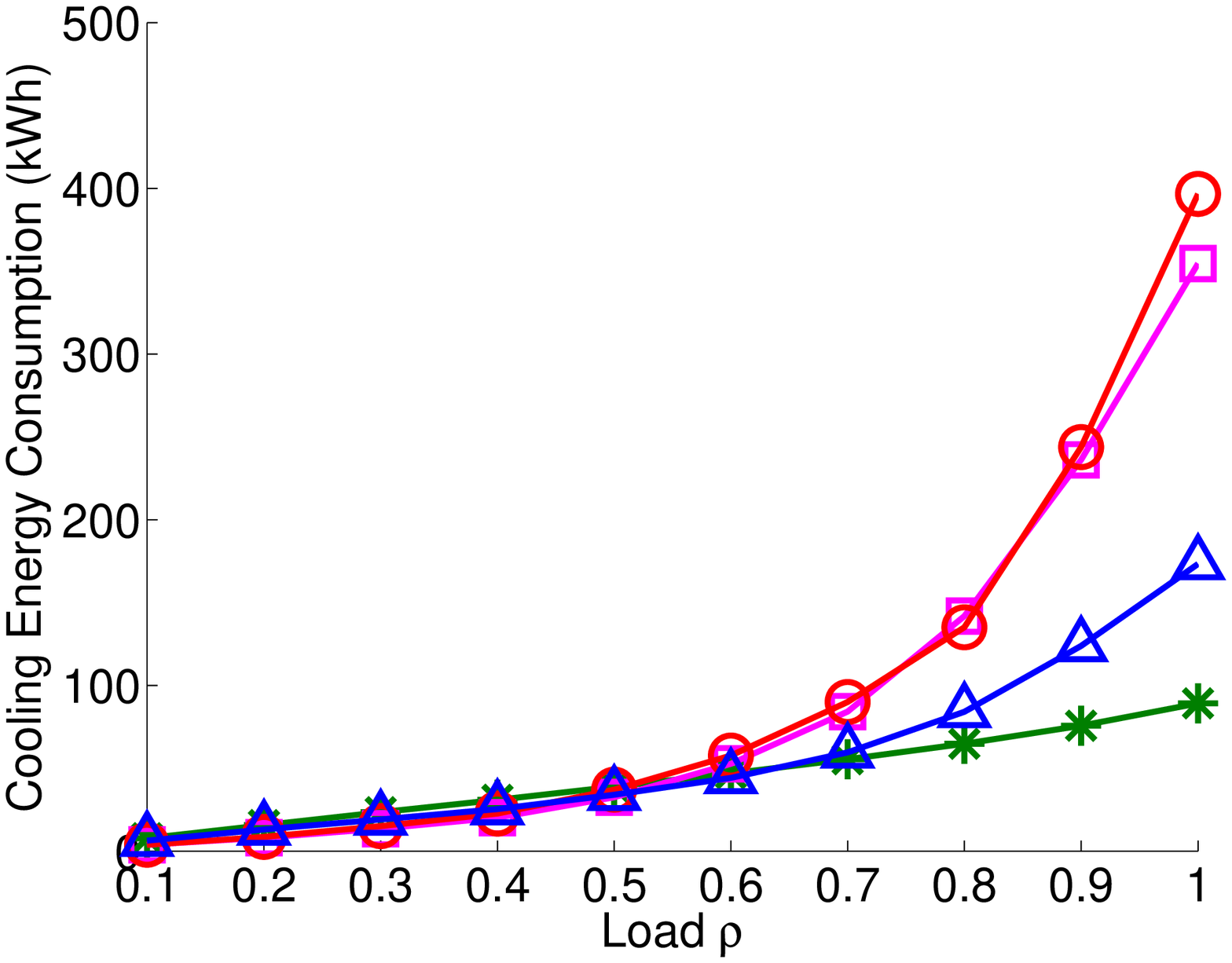}
    }
    \caption{Performance of \emph{MinHR} under different server placements and system loads. The legend applies to all subfigures.}
    \label{fig:sp_minHR}
\end{figure*}

\begin{figure*}[t]
\centering
    \subfigure[$H_{i, j}^{E, P}$]
    {
        \label{fig:tradeoff_mix:energy}
        \includegraphics[width=2.1in]{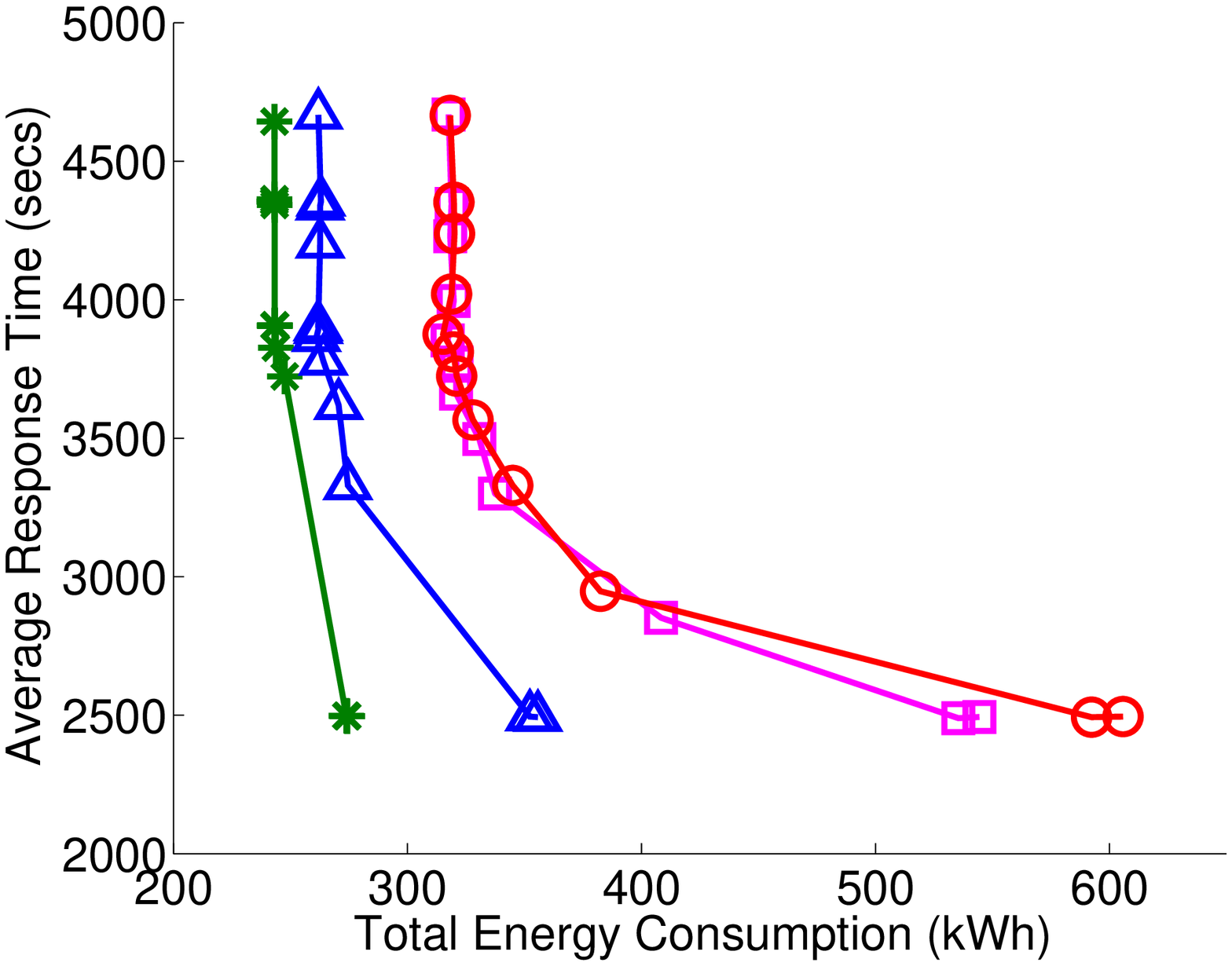}
    }
    \subfigure[$H_{i, j}^{HR, P}$]
    {
        \label{fig:tradeoff_mix:minHR}
        \includegraphics[width=2.1in]{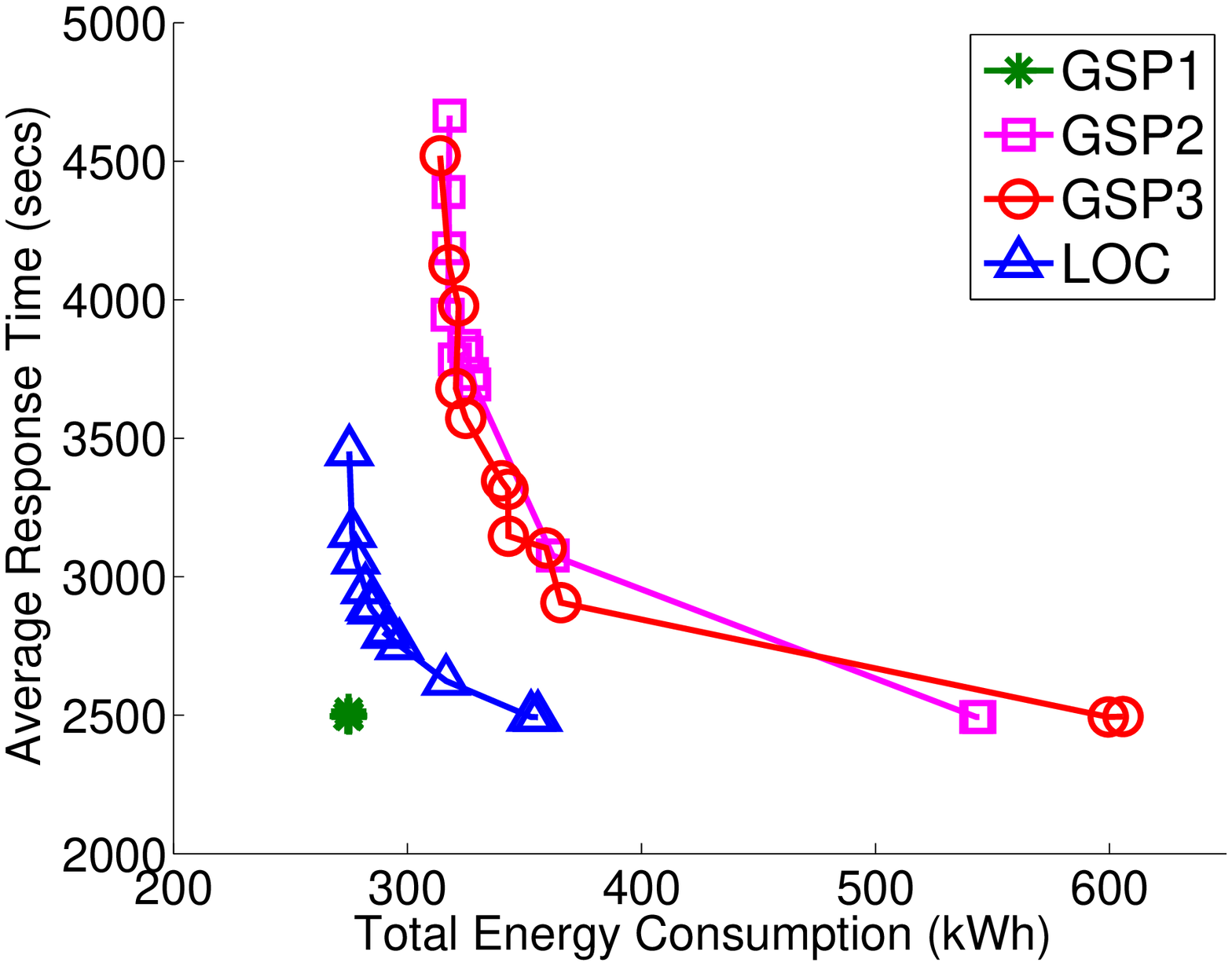}
    }
    \subfigure[$H_{i, j}^{T, P}$]
    {
        \label{fig:sp_tradeoff_mix:fast}
        \includegraphics[width=2.1in]{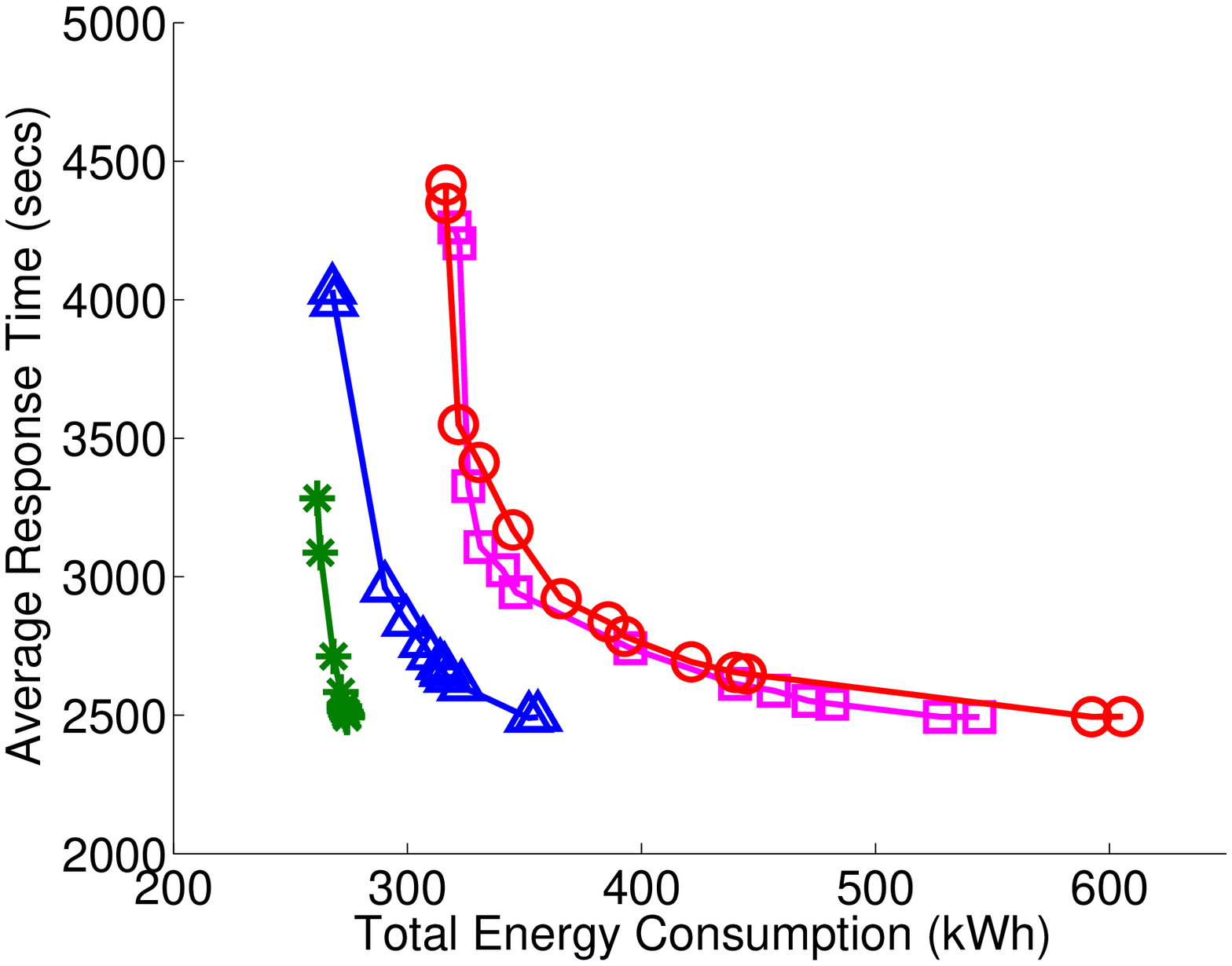}
    }
    \caption{Energy-performance tradeoff curves for $H_{i, j}^{E, P}$, $H_{i, j}^{HR, P}$ and $H_{i, j}^{T, P}$ under four different server placements at load $\rho = 0.8$. The legend applies to all subfigures.}
    \label{fig:tradeoff_mix}
\end{figure*}

\section{Related Work}\label{sec:related}

In this section, we review some related work in the literature on multi-objective scheduling and
thermal-aware scheduling for datacenters.

\paragraph{Multi-objective scheduling}

Scheduling with multiple conflicting objectives has attracted much attention in many optimization
problems. \secref{multi} described a few commonly used approaches. The following reviews some
applications of these approaches in various problem domains.

(1). \emph{Simple priority.} This is a simple priority-based approach to optimize multiple
objectives in sequence. Assayad et al.~\cite{Assayad04_MakespanReliabilityCompromiseFunction}
introduced a bi-criteria compromise function to set priorities between makespan and reliability for
scheduling real-time applications. To minimize carbon emission and to maximize profit, two-step
policies were proposed by Garg et al. \cite{Garg11_ProfitEnergyTradeoff} to map applications to
heterogeneous datacenters based on the relative priority of the two objectives. Du et
al.~\cite{Du13_Tradeoff} proposed heuristics to optimize the QoS for interactive services before
considering energy consumption on multicore processors with DVFS (Dynamic Voltage \& Frequency
Scaling) capability.

(2). \emph{Pareto frontier.} This approach is often used in the offline setting to generate a set
of nondominated solutions. Durillo et al. \cite{Durillo13_ParetoFront} applied this technique to
tradeoff makespan and energy consumption for heterogeneous servers. Torabi et al.
\cite{Torabi13_PSOWithFuzzyData} used particle swarm optimization to approximate the pareto
frontier for the unrelated machine scheduling problem with uncertainties in the inputs. Gao et
al.~\cite{Gao13_VirtualMachines} utilizes ant colony optimization to obtain the pareto frontier for
resource wastage and power consumption in virtual machine placement. Evolutionary algorithms were
employed in \cite{Yu07_GridEAParetoFront, Garg12_EAParetoFront} to obtain a set of alternative
solutions for scheduling scientific workloads in the Grid environment.

(3). \emph{Constraint optimization.} This approach optimizes one objective subject to constraint(s)
on the other(s). Rizvandi et al.~\cite{Rizvandi10_LinearCombineDVFS} applied it to minimize the
energy consumption subject to the makespan achieved in an initial schedule. A mixed integer
programming model was used by Petrucci et al.~\cite{Petrucci11_EnergySubjectToPerformance} to
reduce the power consumption of virtualized servers subject to QoS requirements. Fard et al.
\cite{Fard12_UserSpecifiedConstraints} developed a double strategy to minimize the Euclidean
distance between the generated solutions to a set of user-specified constraints in a four-objective
optimization problem. The authors in \cite{Grandinetti13_eConstraintMethod} applied
$\epsilon$-constraint method to cloud scheduling, which optimizes each objective in turn with upper
bounds specified for the others.

(4). \emph{Weighted combination.} This approach combines multiple objectives into a single one. Lee
and Zomaya \cite{Lee11_NormalizedTimePlusEnergyThenEnergySubjectToTime} used DVFS to tradeoff
makespan with energy consumption by considering a weighted sum of the two objectives. The same
technique was used by the authors of \cite{Andrew10_WeightedSum, Sun09_Tradeoff} in an online
manner to minimize a combined objective of job response time and energy. A similar approach was
taken by Sheikh and Ahmad \cite{Sheikh12_parentoSetThenWeightedSum}, who considered an additional
objective of peak temperature in a multicore system, and hence optimizing three objectives at the
same time. Instead of summation, some work (e.g., \cite{Cong12_EnergyDelayProduct,
Petrucci12_EnergyDelayProduct}) also used energy-delay product as a metric for scheduling
applications in heterogeneous multicore systems.

Compared to these approaches, our fuzzy-based priority approach provides a rather flexible solution
to handling two or more conflicting objectives. Although multi-objective scheduling with ``fuzzy"
or ``good enough" solutions \cite{Xu10_MultiFuzzySet, Zhang13_OOMethod} are known in the
pareto-based approach, our fuzzy method is novel when (soft) priorities exist between different
objectives. The principle can be potentially applied to other multi-objective optimization
problems.

\paragraph{Thermal-Aware Scheduling}

As cooling energy constitutes a signifcant fraction of the total energy consumption in today's
large-scale datacenter, thermal-aware scheduling for this environment has been the focus of many
research in recent years.

Wang et al.~\cite{Wang09_ThermalDataCenter, Wang12_ThermalDatacenter} considered thermal-aware
workload placement in datacenters to reduce the server temperatures characterized by an RC-model,
while minimizing the job response time. They proposed simple heuristics that allocate ``hot" jobs
to ``cool" computing nodes, as well as backfilling techniques for scheduling parallel applications.
In their study, the thermal map of the data center is assumed to be available through ambient and
on-board sensors.

Moore et al.~\cite{Moore05_Datacenter} first introduced the concept of heat recirculation effect
and proposed workload placement algorithms, including \emph{MinHR}, to reduce the recirculation of
heat and the cooling cost in a datacenter. A prediction tool called Weatherman
\cite{Moore06_Weatherman} was used to predict the data center thermal map using machine learning
techniques. The authors showed that the tool accurately predicts the heat distribution of the
datacenter without the need of static thermal configuration, and a scheduling algorithm based on
Weatherman achieves similar performance as \emph{MinHR}.

Tang et al.~\cite{Tang08_CyberPhysical} also studied the problem of minimizing the cooling cost in
datacenters with heat recirculation consideration. Based on an abstract heat flow model, they
characterized the thermal behavior of datacenters via a heat distribution matrix. The model was
validated by computational fluid dynamics (CFD) simulations in \cite{Tang06_SensorBased,
Sansottera11_CoolingPerformance}. They proposed offline scheduling solutions by using genetic
algorithms and quadratic programming, which were evaluated using the heat distribution matrix
captured for a small-scale datacenter. The same matrix is used in this paper for evaluating our
online scheduling heuristics.

Instead of minimizing only the cooling cost, Pakbaznia and Pedram
\cite{Pakbaznia09_TotalComputeAndCoolingPower} considered minimizing the total energy of a
datacenter from both computation and cooling. They showed that performing consolidation to turn off
idle servers together with job scheduling to account for the heat recirculation can significantly
reduce the total power usage. Banerjee et al.~\cite{BanerjeeMuVa11_CoolingAware} further
considered cooling-aware scheduling workload placement by exploring the dynamic cooling behavior of
the CRAC unit in a datacenter.

While the above results considered only the energy consumption of a datacenter, the following also
takes application performance into consideration. Mukherjee et al.~\cite{Mukherjee09_Datacenter}
considered a similar problem as in \cite{Pakbaznia09_TotalComputeAndCoolingPower} but further took
the temporal dimension of the job placements into account. They formulated the problem as a
non-linear program and proposed both offline and online heuristics to minimize the total energy
subject to the deadline constraint for the jobs. Sansottera and Cremonesi
\cite{Sansottera11_CoolingPerformance} considered a datacenter environment hosting web services,
and presented heuristics to minimize the total energy subject to service response time constraints.
Kaplan et al. \cite{Kaplan13_Communication} studied the dual optimization of cooling and
communication costs for HPC applications in a datacenter. They proposed a heuristic algorithm that
achieves a good tradeoff between the two objectives, and subject to reliabity constraint specified
by the processor junction temperature.

In contrast to the previous work, which focused on either offline scheduling or homogeneous
datacenters, we studied the problem of online scheduling for heterogeneous datacenters with both
energy and performance considerations, as well as their tradeoffs. Furthermore, we considered
static server placement to balance the thermal distribution in the presence of nonuniform heat
distribution matrix. In our previous work \cite{Sun14_fuzzyFactor}, we have applied this concept to
the arrangement of computing nodes in a smaller scale problem (at the server level). To the best of
our knowledge, no prior work has considered this problem for heterogeneous datacenters.

\section{Conclusion and Future Work}\label{sec:conclusion}

In this paper, we have considered the energy-efficient and thermal-aware placements for both
servers and workloads in heterogeneous datacenters. For the static server placement problem, we
have shown that it is NP-hard and presented a greedy heuristic. To schedule the workloads, we have
presented a greedy scheduling framework, which can be applied in an online manner with any
well-defined cost function. Moreover, a novel fuzzy-based priority approach was proposed to
simultaneously optimize two or more conflicting objectives. Simulations were conducted for a
heterogeneous datacenter with heat recirculation effect. The results demonstrated the effectiveness
of the proposed approaches for exploring the energy-performance tradeoff with cooling
consideration. Our static server placement heuristic was also shown to provide better thermal
balance, which directly leads to reductions in cooling cost.

For future work, other resource management techniques, such as DVFS or server consolidation, can be
applied to achieve better energy and thermal efficiency. In this context, the tradeoff between the
computing energy and cooling energy can be explored, possibly with the fuzzy-based priority
approach. For the static server placement problem, it will be useful to design better heuristic
solutions or good approximation algorithms, and to consider large datacenters with more rack slots
than servers, which will provide additional space for optimization. Finally, we considered server
placement and job scheduling separately in this paper; it may be helpful to consider the two
aspects jointly to achieve further energy savings.

\section*{Acknowledgment}

This research was funded by the European Commission under contract 288701 through the project
CoolEmAll.

\bibliographystyle{IEEEtran}

\section*{Appendix. NP-Hardness Proof of the Static Server Placement Problem}

\begin{claim}
The static server placement problem is NP-hard.
\end{claim}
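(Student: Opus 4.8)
The plan is to establish NP-hardness by a polynomial-time reduction from \textsc{Partition}, a natural choice since the objective $\max\ \mathbf{D}\cdot\mathbf{U}^{ref}_{\sigma}$ is a min-max over assignments. Recall that \textsc{Partition} asks, given positive integers $a_1,\dots,a_n$ with $\sum_i a_i = 2B$, whether some subset sums to exactly $B$. From such an instance I would build a placement instance with $2n$ slots and $2n$ servers: the first $n$ servers carry reference powers $a_1,\dots,a_n$ and the remaining $n$ are filler servers of power $0$ (made strictly positive below). The heat-distribution matrix $\mathbf{D}$ is taken to be $\{0,1\}$-valued, with its first row equal to $1$ on slots $1,\dots,n$ and $0$ elsewhere, its second row equal to $0$ on slots $1,\dots,n$ and $1$ elsewhere, and all remaining rows identically zero. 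Under any placement $\sigma$ the inlet increase read by the first (resp.\ second) sensor is then exactly the total power assigned to the first (resp.\ last) $n$ slots, while every other sensor reads $0$; hence the objective equals the larger of these two block sums.

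Since the two block sums always add to the fixed total $\sum_i U^{ref}_i$, their maximum is at least half the total, with equality precisely when the blocks are perfectly balanced. I would then argue that a balanced placement exists if and only if the \textsc{Partition} instance is a yes-instance: given a subset $S$ summing to $B$, place the corresponding powered servers in the first block and pad it with fillers, sending everything else to the second block, so that both blocks sum to $B$; conversely, the powered servers landing in the first block of any balanced placement form a subset summing to $B$. Consequently the optimum is $B$ exactly when a balanced partition exists and strictly larger otherwise, so a polynomial-time optimizer for the placement problem would decide \textsc{Partition}.

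The one genuine obstacle is that each block contains \emph{exactly} $n$ slots, so a permutation cannot directly realise an arbitrary-cardinality subset — this is the role of the filler servers. To keep all reference powers strictly positive (as the model intends) while neutralising the cardinality constraint, I would give each powered server power $a_i + C$ and each filler server power $C$ for a large constant $C$ (say $C = \max_i a_i + 1$). Because every block holds exactly $n$ servers, each block sum picks up the \emph{same} additive term $nC$, which cancels out of the balancing condition; the threshold merely shifts from $B$ to $B + nC$, and the equivalence is unchanged. Verifying the two implications and the polynomial size of the construction is then routine.

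Finally, I would note that the same idea yields \emph{strong} NP-hardness by reducing from \textsc{3-Partition} instead: with $3q$ numbers to be split into $q$ triples of equal sum $B$, one uses $3q$ slots and servers, sets the powers to the $3q$ numbers directly, and lets the first $q$ rows of $\mathbf{D}$ be the indicators of the consecutive triples of slots (all other rows zero). The fixed cardinality of three per block now matches the combinatorial structure of \textsc{3-Partition} exactly, so no base-constant trick is needed, and the optimum equals $B$ iff the triples can be balanced. I expect the only fussy part to be the bookkeeping that keeps $\mathbf{D}$ a legitimate nonnegative matrix and the powers positive; the reduction itself is short.
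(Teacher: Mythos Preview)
Your proposal is correct, and the \textsc{3-Partition} reduction you sketch at the end is essentially the paper's own proof: the paper sets $m=3h$, puts $U_j^{ref}=a_j$, and takes $\mathbf{D}$ to have rows $3,6,\dots,3h$ equal to the indicators of the consecutive slot triples (your choice of rows $1,\dots,q$ instead is purely cosmetic, since only the maximum row value matters). Your \textsc{Partition} route is also sound but buys only weak NP-hardness and requires the filler-plus-shift trick to defeat the fixed block cardinality; the paper sidesteps that bookkeeping by going directly to \textsc{3-Partition}, where the cardinality-three constraint is built into the source problem.
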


\begin{proof}
We reduce the 3-partition problem to the static server placement problem. In 3-partition, a finite
set $\A = \{a_1, a_2, \cdots, a_n\}\subset \mathbb{Z^+}$ of $n = 3h$ positive integers is given,
and the sum of the integers is $\sum_{j=1..n} a_j = h\cdot B$. The question is whether $\A$ can be
partitioned into $h$ disjoint subsets such that the sum of the numbers in each subset is equal to
$B$. The problem is known to be NP-hard even if every integer in $\A$ is strictly between $B/4$ and
$B/2$, so each subset must contain exactly three numbers \cite{Garey79_NPHard}.

Given an instance $\A$ of the 3-partition problem, where each integer $a_j\in \A$ satisfies $B/4 <
a_j < B/2$, we construct an instance of the static server placement problem as follows. Let $m = n
= 3h$, and assign
\begin{equation*}
U_j^{ref} = a_j \ , \forall j = 1,\cdots,n \ .
\end{equation*}
The heat-distribution matrix $\mathbf{D}$ is specified by setting
\begin{equation*}
d_{3l, 3l-2} = d_{3l, 3l-1} = d_{3l, 3l} = 1 \ , \forall l = 1,\cdots,h \ ,
\end{equation*}
and setting all the other elements to zero.

Suppose $\sigma^*$ is an optimal mapping for the server placement instance constructed above. The
temperature increase at the inlet of slot $S_k$, where server $M_{\sigma^*(k)}$ is placed, is given
by
\[
    T^{incr}_{k}=
\begin{cases}
    a_{\sigma^*(k-2)} + a_{\sigma^*(k-1)} + a_{\sigma^*(k)},& \text{if } k\text{ mod }3 = 0\\
    0,& \text{otherwise }
\end{cases}.
\]
The maximum temperature increase at any inlet is therefore
\begin{equation*}
T^{incr}_{max} = \max_{k = 3, 6,\dots, 3h} \paren{a_{\sigma^*(k-2)} + a_{\sigma^*(k-1)} +
a_{\sigma^*(k)}} \ .
\end{equation*}
This leads to the conclusion that the server placement instance has a maximum temperature increase
of $B$ if and only if $\A$ can be partitioned into $h$ disjoint subsets, where the sum of the
numbers in each subset is also $B$.
\end{proof}

\end{document}